\newif\ifJFP
\newenvironment{proof}[1]{\begin{quotation}\noindent\textsf{Proof:} #1}
{\(\Box\)\end{quotation}}
\newtheorem{theo}{Theorem}
\newtheorem{fact}[theo]{Fact}
\newtheorem{prop}[theo]{Proposition}
\newtheorem{lemma}[theo]{Lemma}
\newtheorem{corollary}[theo]{Corollary}
\newcommand{\ie}{i.e.,~}
\newcommand{\nat}{\ensuremath{\mathbb{N}}}
\newcommand{\T}{\mathcal{T}}
\newcommand{\F}{\mathcal{F}}
\newcommand{\G}{\mathcal{G}}
\newcommand{\Var}[1]{\underline{\mathsf{#1}}}
\newcommand{\PNF}{{}^{\tiny \textsf{NF}}\!P}
\newcommand{\QNF}{{}^{\tiny \textsf{NF}}\!Q}
\newcommand{\p}[2]{p^{[#1]}_{#2}}
\newcommand\subrel[2]{\mathrel{\mathop{#2}\limits_{#1}}}
\newcommand{\simtozero}{\subrel{z \to z_0}{\sim}}
\definecolor{darkbrown}{cmyk}{.3,.75,.75,.15}
\definecolor{vertfonce}{rgb}{0,.5,0}
\newcommand{\bl}[1]{\textcolor{blue}{#1}}
\definecolor{vertfonce}{rgb}{0,.5,0}
\newif\ifcomment
\title[Counting and generating lambda terms]{Counting and generating lambda terms}
\author[K. Grygiel and P. Lescanne]{Katarzyna Grygiel$^\dagger$
\thanks{This work was supported by the National Science Center of Poland, grant number 2011/01/B/HS1/00944, when the author hold a post-doc position at the Jagiellonian University within the SET project co-financed by the European Union.}\\
 \and\\
Pierre Lescanne$^{\dagger,\ddagger}$\\[10pt]
$^\dagger$Theoretical Computer Science Department, \\
Faculty of Mathematics and Computer Science,\\
Jagiellonian University, ul. Prof. {\L}ojasiewicza 6, 30-348 Krak\'ow, Poland\\[10pt]
ENS de Lyon, University of Lyon, \\
$^\ddagger$LIP (UMR 5668 CNRS ENS Lyon UCBL INRIA)\\ 46 all\'ee
d'Italie, 69364 Lyon, France\\
\email{grygiel@tcs.uj.edu.pl,pierre.lescanne@ens-lyon.fr}
}
\title{Counting and generating lambda terms}
\author[1]{Katarzyna Grygiel\thanks{This work was supported by the National Science Center of Poland, grant number 2011/01/B/HS1/00944, when the author hold a post-doc position at the Jagiellonian University within the SET project co-financed by the European Union.}
\thanks{\textsf{email:} grygiel@tcs.uj.edu.pl}}
\author[1,2]{Pierre Lescanne\thanks{\textsf{email:} pierre.lescanne@ens-lyon.fr}
}
\affil[1]{Theoretical Computer Science Department

Faculty of Mathematics and Computer Science

Jagiellonian University 

ul. Prof. {\L}ojasiewicza 6, 30-348 Krak\'ow, Poland
\medskip
}
\affil[2]{ENS de Lyon

 LIP  (UMR 5668 CNRS ENS Lyon UCBL INRIA)

University of Lyon

46 all\'ee d'Italie, 69364 Lyon, France
}
\begin{document}

\maketitle

\begin{abstract}
  Lambda calculus is the basis of functional programming and higher order proof
  assistants.  However, little is known about combinatorial properties of lambda
  terms, in particular, about their asymptotic distribution and random
  generation. This paper tries to answer questions like: How many terms of a given
  size are there?  What is a ``typical'' structure of a simply typable term?  Despite
  their ostensible simplicity, these questions still remain unanswered, whereas
  solutions to such problems are essential for testing compilers and optimizing
  programs whose expected efficiency depends on the size of terms.  Our approach
  toward the aforementioned problems may be later extended to any language with bound
  variables, i.e., with scopes and declarations.

  This paper presents two complementary approaches: one, theoretical, uses complex
  analysis and generating functions, the other, experimental, is based on a generator
  of lambda terms.  Thanks to de Bruijn indices, we provide three families of
  formulas for the number of closed lambda terms of a given size and we give four
  relations between these numbers which have interesting combinatorial
  interpretations.  As a by-product of the counting formulas, we design an algorithm
  for generating $`l$-terms.  Performed tests provide us with experimental data, like
  the average depth of bound variables and the average number of head lambdas. We
  also create random generators for various sorts of terms.  Thereafter, we conduct
  experiments that answer questions like: What is the ratio of simply typable terms
  among all terms?  \emph{(Very small!)}  How are simply typable lambda terms
  distributed among all lambda terms?  \emph{(A~typable term almost always starts
    with an abstraction.)}

  In this paper, abstractions and applications have size $1$ and variables have
  size~$0$.

  \medskip

  \noindent \textbf{Keywords:} lambda calculus, combinatorics, functional
  programming, test, random generator, ranking, unranking

\end{abstract}

\ifJFP \pagebreak[4] \fi

\section{Introduction}
\label{sec:introduction}

Let us start with a few questions relevant to the problems we address.
\begin{itemize}
\item How many closed $`l$-terms are of size $50$ (up to $`a$-conversion)?
\[996657783344523283417055002040148075226700996391558695269946852267.\]
\item How many closed terms of size $n$ are there?\\
\emph{We will give a recursive formula for this number in Section~$\ref{sec:polyn-count-lambd}$.}
\item What does the following sequence enumerate:
  \[ 0, 1, 3, 14, 82, 579, 4741, 43977, 454283, 5159441, 63782411?\] \emph{This
    sequence enumerates closed terms of size $0$, $1$, $2$, $3$, $4$, $5$, $6$, $7$,
    $8$, $9$, $10$.  It is the sequence A220894 of the Online Encyclopedia of Integer
    Sequences (\url{https://oeis.org/A220894}).  We will provide three ways to compute
    it (Section~$\ref{sec:three-formulas}$). }
\item Is it possible to generate simply typable terms randomly?\\
  \emph{Yes, according to the process which consists in generating random $`l$-terms
    with uniform probability and sieving those that are simply typable.  Thus, we can
    generate random simply typable terms of size up to $50$.}
\item Is a term starting with an abstraction more likely to be typable than a term
  starting with an application?\\
\emph{The answer is positive as shown in Figure~$\ref{fig:dist_typed_size_25}$,
  which gives the distribution of simply typable $`l$-terms among all $`l$-terms.}
\item Do these results have practical consequences?\\
\emph{Yes, they enable random generation of simply typable terms in the case of terms of size
  up to $50$ in order to debug compilers or other programs,
  manipulating terms, e.g., type checkers or pretty printers.}
\end{itemize}

The above questions seem rather classical, but amazingly very little is known about
combinatorial aspects of $`l$-terms, probably because of the intrinsic difficulty of
the combinatorial structure of lambda calculus due to the presence of bound
variables.    However, the answers to these questions are extremely
important not only for a better understanding of the structure of $`l$-terms, but also
for people who build test samples for debugging compilers.  Perhaps the
reason for this ignorance lies in the surprising form of the recurrences.  Indeed, due to the
presence of bound variables, the recurrence does not work in the way mathematicians expect
and are used to.  The induction lies on two variables, one decreasing (the size),
the other increasing (the number of free variables). 
Thus none of the methods used in the reference book of
Flajolet and Sedgewick \cite{flajolet08:_analy_combin} applies.  Why is that?  In what
follows we compute the number of $`l$-terms (and of normal forms) of size $n$ with at
most $m$ distinct free variables.  Denoting the number of such terms by $T_{n,m}$, the formula for
$T_{n,m}$ contains $T_{n-1,m+1}$ and this growth of $m$ makes the formula averse to
treatments by generating functions and classical analytic combinatorics.  We notice that
for a given $n$ the expression for $T_{n,m}$ is a polynomial in $m$.  These polynomials can
be described inductively and their coefficients are given by recurrence formulas.  These formulas
are still complex, but can be used to compute the constant coefficients, which correspond
to the numbers of closed $`l$-terms.  For instance, the leading coefficients of the
polynomials are the well known Catalan numbers which count binary trees.

In order to find the recurrence formula for the number of \mbox{$`l$-terms} of a
given size, we make use of the representation of variables in $`l$-terms by de Bruijn
indices.  Recall that a de Bruijn index is a natural number which replaces a term
variable and enumerates the number of $`l$'s encountered on the way between the
variable and the $`l$ which binds the latter.  In this paper, we assume the
combinatorial model in which the size of each occurrence of abstraction or
application is counted as $1$, while the size of variables (de~Bruijn indices)
as~$0$.  This method is a realistic model of the complexity of $`l$-terms and allows
us to derive the recurrences very naturally.  

From the formula for counting $`l$-terms we derive one-to-one assignments of terms of
size $n$ with at most $m$ distinct free indices to the numbers in the interval
$[1..T_{n,m}]$.  From this correspondence, we develop a program for generating
$`l$-terms, more precisely for building $`l$-terms associated with numbers in the
interval $[1..T_{n,m}]$.  In combinatorics the function that counts objects by
assigning a number to each object is called a \emph{ranking} and its inverse, i.e.,
the function that assigns an object to a rank is called an
\emph{unranking}~\cite{integer:ranking}.  Thus, in this paper, we can say that we rank and unrank
lambda-terms and normal forms.  If we pick a random number in the interval
$[1..T_{n,m}]$, then we get a random term of size $n$ with at most $m$ distinct free
variables.  Most of the time we consider closed $`l$-terms, which means $m=0$.
Beside the interest in such a random generation for applications like testing, this
allows us to compute practical values of parameters by Monte-Carlo methods.  Overall,
we are able to build a random generator for simply typable terms. Unlike the method
used so far \cite{Palka:2011:TOC:1982595.1982615}, which consists in unfolding the
typing tree, we generate random \mbox{$`l$-terms} and test their typability, until we
find a simply typable term.  This method allows us to generate uniformly simply
typable $`l$-terms up to size $50$.  We also use this method to describe the
distribution of typable terms among all terms and typable normal forms among all
normal forms.

\subsection*{Structure of the paper}
\label{sec:structure-ther-paper}

According to its title, the paper is divided into two parts, the first one focuses on counting terms
and its mathematical treatment, the second one addresses term generation and its applications. The
first part (Sections~\ref{sec:polyn-count-lambd} and \ref{sec:generating}) is devoted to
the formulas counting $`l$-terms.  In Section~\ref{sec:polyn-count-lambd} we study
polynomials giving the numbers of terms of size~$n$ with at most $m$ distinct free
variables.  In
Section~\ref{sec:an-comb-interpr}, we show that the numbers of $i$-contexts give a
combinatorial interpretation of the coefficients of the polynomials and yield a new formula
for counting the closed terms of size $n$. If we add formulas for counting $`l$-terms of
size $n$ with exactly $m$ distinct free variables, we have three formulas of three different origins for
counting closed terms which we describe and compare in Section~\ref{sec:three-formulas}.
In Section~\ref{sec:generating} we derive generating functions and asymptotic values for
these coefficients.  In Section~\ref{sec:count-norm-forms} we give a formula for counting
normal forms.  In the second part of the paper, \ie in Section~\ref{sec:lambda-term-gener} and
Section~\ref{sec:simply-typed-terms}, we propose programs to generate untyped and typable
terms and normal forms.  Section~\ref{sec:exper-data} is devoted to experimental
results.  Section~\ref{sec:related-work} presents related works.




\section{Counting terms with at most $m$ distinct free variables}
\label{sec:polyn-count-lambd}

We represent terms using de Bruijn indices \cite{deBruijn72}, which means that
variables are represented by numbers $\Var{1}, \Var{2}, \ldots, \Var{m}, \ldots$,
where an index, for instance~$\Var{k}$, is the number of $`l$'s above the location of
the index and below the~$`l$ that binds the variable, in a representation of
$`l$-terms by trees.  For instance, the term with variables $`l x. `l y. x\,y$ is
represented by the term with de Bruijn indices $`l `l \Var{2} \Var{1}$. The variable
$x$ is bound by the top~$`l$.  Above the occurrence of~$x$ there are two $`l$'s,
therefore $x$ is represented by $\Var{2}$, and from the occurrence of~$y$ we count
just the~$`l$ that binds $y$, so $y$ is represented by $\Var{1}$.  Notice that unlike
\ifJFP Lescanne~\shortcite{LescannePOPL94} \else Lescanne~\cite{LescannePOPL94} \fi and like
\ifJFP de Bruijn~\shortcite{deBruijn72} and Abadi \emph{et al.}
\shortcite{AbadiCCL91JFP} \else de Bruijn~\cite{deBruijn72}  and Abadi \emph{et
  al.}~\cite{AbadiCCL91JFP} \fi we start
indices at $1$, since it fits better with our aim of counting terms.

In what follows, by \emph{terms} we mean untyped terms with de Bruijn indices and we
often speak indistinctively of variables and (de Bruijn) indices.  Assume that in a
term~$t$ not all occurrences of indices need to be bound, i.e., there may occur indices that do
not correspond to surrounding $`l$'s. Such indices are called ``free'' in $t$.  Now,
we introduce the notational convention for ``free'' indices occurring in terms. An
\emph{interval of free indices} for a term $t$ is a set $\{\Var{1}, \Var{2}, \ldots,
\Var{m}\}$ of indices, written $[\Var{1}..\Var{m}]$, such that
\begin{enumerate}[(i)]
\item if $t$ is an index $\Var{i}$, then any interval $[\Var{1}..\Var{m}]$ with $1\le i\le m$ is an interval for $t$,
\item if $t$ is an abstraction $`l s$ and an interval of free indices for $s$ is
  $[\Var{1} .. \Var{m+1}]$, then the interval of free indices for $t$ is
  $[\Var{1} .. \Var{m}]$ (since the index $\Var{1}$ is now bound and the others are
  assumed to decrease by one),
\item if $t$ is an application $t_1 t_2$ and an interval of indices for $t_1$ and
  $t_2$ is $[\Var{1} .. \Var{m}]$, then an interval of indices for $t$ is
  $[\Var{1} .. \Var{m}]$.
\end{enumerate}
To illustrate (ii), assume $t = `l s = `l \Var{3}\,\Var{1}$. An interval of free
indices for $s$ is $[\Var{1} .. \Var{m+1}]$ for any $m\ge 2$.  For
instance for $m=3$, $[ \Var{1},\Var{2}, \Var{3}, \Var{4} ]$ is an interval of free
indices for $s$.  For ${m=2}$, $[ \Var{1},\Var{2}, \Var{3} ]$ is another interval of
free indices for $s$. An interval of free indices for $t$ is $[\Var{1} .. \Var{m}]$ 
for any $m\ge 2$ and for $m=3$, $[ \Var{1},\Var{2}, \Var{3}]$ is an
interval of free indices for $t$.  For $m=2$, $[\Var{1},\Var{2}]$ is another interval
of free indices for $t$.  To say it in rough words, whereas one sees $\Var{3}$ as
$\Var{3}$ in $s$, one sees $\Var{3}$ as $\Var{2}$ in $t$ due to the abstraction $`l$
which decreases the indices as they are seen.

We measure the size of a term in the following way:
\begin{eqnarray*}
|\Var{m}| &=& 0, \textrm{~for every index~} \Var{m},\\
|`l t| &=& |t| + 1,\\
|t s| &=& |t| + |s| + 1.
\end{eqnarray*}

Since $`l$-terms can be represented as unary-binary trees with labels or pointers,
the notion of size of a term $t$ corresponds to the number of unary and binary
vertices in the tree representing $t$. This also means that adding a new variable (in
other words, adding a new leaf to a tree) or a new operator (a unary or a binary
vertex) always increases the size of a term by $1$.

One can define $m$ using the concept of term openness (due to John Tromp).  The \emph{openness} of a
terms is the minimum number of outer $`l$'s necessary to close the terms, i.e., to
make the term a closed term.  For instance, the openness of $(`l x. (x\,y))(`l x. (x\,z))$ is equal to $2$ since the term needs two abstractions to become closed.

Let us denote by $\T_{n,m}$ the set of terms of size $n$ with at most $m$ distinct
free de Bruijn indices.  $\T_{n,m}$ is isomorphic to the set of terms having an
openness equal to at most~$m$.  In what follows, we use the symbol $@$ to denote
applications, whereas classical theory of $`l$-calculus uses concatenation, which we
find not explicit enough for our purpose.

\begin{eqnarray*}
\T_{0,m} &=& [\Var{1} .. \Var{m}] \\
\T_{n+1,m} &=& `l \T_{n,m+1} \ \uplus\ \biguplus_{i=0}^n\T_{i,m} @ \T_{n-i,m}.
\end{eqnarray*}

For all $n,m \in {\mathbb N}$, let $T_{n,m}$ denote the cardinality of the set
$\T_{n,m}$. According to the
definition of size, operators $`l$ and $@$ have size $1$ and de Bruijn indices have
size~$0$. Therefore, we get the following two equations specifying $T_{n,m}$:

\begin{eqnarray*}
 T_{0,m} &=& m\\
 T_{n+1,m} & = &T_{n,m+1} + \sum_{i=0}^{n} T_{i,m} T_{n-i,m}.
\end{eqnarray*}

This means that there are $m$ terms of size $0$ with at most $m$ distinct free de Bruijn indices, which are terms
that are just these indices.   Terms of size $n+1$ with at most $m$ distinct free de
Bruijn indices are either abstractions with at most $m+1$ distinct free indices on a term of size
$n$ or applications of terms with at most $m$ distinct free indices to make a term of size $n+1$.  As we said in the introduction, the $11$ first values of $T_{n,0}$ are:
\[0, 1, 3, 14, 82, 579, 4741, 43977, 454283, 5159441, 63782411.\] %
$T_{n,0}$ is sequence \textbf{A220894} in the \emph{On-line Encyclopedia of Integer Sequences}.
\begin{figure*}
  \centering
  \begin{tiny}
    \begin{math}
      \begin{array}[c]{ c r  r  r  r  r  r  r }
        \mathit{\mathbf{n\backslash m}} & \mathbf{0} & \mathbf{1} & \mathbf{2} & \mathbf{3} & \mathbf{4} & \mathbf{5} & 6  \\
        \hline 
        \mathbf{0} &  0 & 1 & 2 & 3 & 4 & 5 & 6   \\
        \mathbf{1}&1 & 3 & 7 & 13 & 21 & 31 & 43  \\
        \mathbf{2}&3 & 13 & 41 & 99 & 199 & 353 & 573 \\
        \mathbf{3}&14 & 76 & 312 & 962 & 2386 & 5064 & 9596  \\
        \mathbf{4}&82 & 542 & 2784 & 10732 & 32510 & 82122 & 181132 \\
        \mathbf{5}&579 & 4493 & 27917 & 131715 & 482015 & 1440929 & 3687513 \\
        \mathbf{6}&4741 & 42131 & 307943 & 1741813 & 7612097 & 26763551 & 79193491 \\
        \mathbf{7}&43977 & 439031 & 3690055 & 24537945 & 126536933 & 519788827 & 1771730211 \\
        \mathbf{8}&454283 & 5020105 & 47635777 & 365779679 & 2198772055 & 10477986133 &
        40973739725 \\
        \mathbf{9}&5159441 & 62382279 & 658405747 & 5744911157 & 39769404045 & 218213327131 &
        974668783199 \\
        \mathbf{10}&63782411 & 835980065 & 9695617821 & 94786034723 & 746744227319 &
        4681133293821 & 23769847893305 \\
        \mathbf{11}&851368766 & 12004984120 & 151488900012 & 1639198623818 & 14531624611594 &
        103244315616876 & 593009444765240 \\
       \mathbf{12} &12188927818 & 183754242626 & 2502346785164 & 29658034018852 &
        292747054367966 & 2338363467319958 & 15112319033576416 \\
        \mathbf{13} &186132043831 & 2984264710781 & 43560247035581 & 560484305049943 &
        6100545513799835 & 54347237563601321 & 393031286917940401
        \\
        \mathbf{14} &3017325884473 & 51220227153987 & 796828655891895 & 11046637024014049 &
        131425939696979805 & 1295642289776992983 & 10425601907159190187\\
       \end{array}
    \end{math}
  \end{tiny}
  \caption{Values of $T_{n,m}$ for $n$ and $m$ up to $14$ and $6$, respectively}
  \label{fig:T_n_m}
\end{figure*}

Figure~\ref{fig:T_n_m} gives all the values of $T_{n,m}$ for $n$ up to $14$ and $m$ up
to~$6$.  For instance, there is $1$ closed term of size $1$, namely $`l\Var{1}$, there are
$3$ closed terms of size $2$, namely $ `l `l \Var{1}, `l `l \Var{2}, `l \Var{1}\, \Var{1}$,
and there are $14$ closed terms of size $3$, namely


\begin{center}
  \begin{math}
    \begin{array}{ccccccc}
`l`l`l \Var{1} & `l`l`l \Var{2} & `l`l`l \Var{3} & `l`l \Var{1}\,\Var{1} & `l`l \Var{1}\,\Var{2}& 
      `l`l \Var{2}\,\Var{1} & `l`l \Var{2}\,\Var{2} \\ 
`l (\Var{1}\, `l \Var{1}) & `l (\Var{1}\, `l \Var{2}) & `l \Var{1} (\Var{1} \Var{1}) & `l ((`l\Var{1})\, \Var{1} ) & `l ((`l\Var{2})\, \Var{1} ) & `l((\Var{1}\,\Var{1}) \, \Var{1}) & (`l \Var{1})\, `l \Var{1}.
  \end{array}
  \end{math}
\end{center}

Notice that in Section~\ref{sec:lambda-term-gener} we describe how to assign a term 
to a number and therefore how to list terms with increasing numbers.  The above terms
are listed in that order.

\subsection{Computing the $T_{n,m}$'s}
\label{sec:actual}

The recursive definition of $T$ yields an easy naive program in a functional
programming language (here \textsf{Haskell}):
\begin{verbatim}
naiveT :: Int -> Int -> Integer
naiveT 0 m = fromIntegral m
naiveT n m = naiveT (n-1) (m+1) + 
             sum [naiveT i m * naiveT (n-1-i) m | i <- [0..n-1]]
\end{verbatim}
This program is inefficient since it recomputes the values of $T$ at each recursive
call.   For actual computations a program with memoization is required.  In
\textsf{Sage}  this is obtained by requiring the function to be ``cached''. In
\textsf{Haskell} we use the laziness of streams:
\begin{verbatim}
ttab' :: [[Integer]]
ttab' = [0..] : [[t' (n-1) (m+1) + s n m | m <- [0..]] | n <- [1..]]
  where s n m = sum $ zipWith (*) (ti n m) (reverse $  ti n m)
        ti n m = [t' i m | i <- [0..(n-1)]]

t' :: Int -> Int -> Integer
t' n m = ttab !! n !! m
\end{verbatim}
This program is not efficient enough and John Tromp proposed us a better program:
\begin{verbatim}
ttab :: [[[Integer]]]
ttab = iterate nextn . map return $ [0..] 
  where
  nextn ls = zipWith rake (tail ls) ls
  rake (m1:_) ms = (m1 + conv ms) : ms
  conv ms = sum $ zipWith (*) ms (reverse ms)
  
t :: Int -> Int -> Integer
t n m = head $ ttab !! n !! m
\end{verbatim}
Assume that we compute \textsf{ttab~n~0} for the first time.  The basic operation
\textsf{rake} requires $O(n)$ additions and $O(n)$ multiplications. \textsf{nextn}
requires $O(n)$ calls to \textsf{rake} and \textsf{iterate} requires  $O(n)$ to \textsf{nextn}.
Therefore, the complexity
of the first computation of \textsf{t n 0} depends on the complexity of the addition
and of the multiplication of arbitrary-precision integers which we may assume (intuitively) to be
$O(log^2(T_{n,0}))$. Although the question of the asymptotic size of the number
$T_{n,0}$ is open, we know that it is superexponential in $n$ and, on the other hand,
it is asymptotically smaller than~$n^n$. Therefore, \texttt{t n 0} runs in
$O(n^3\times log^2(T_{n,0}))$ which is at least of order~$n^5$ and at most of
order~$n^5 \log^2 (n)$.  Such estimations seem to be in accordance with our
experiments.  Once the table \texttt{ttab} is constructed, the runtime of \texttt{t}
is in $O(n+m)$.

\subsection{The polynomials $P_n$}
\begin{it}
  \begin{small}
    In the end of this section and in the four coming sections we present results of
    mainly combinatorial flavor. We focus there on the quantitative approach to
    lambda calculus, with special emphasis on the challenging problem of counting
    $`l$-terms and approximating its asymptotic behavior.  Therefore, a reader
    interested mostly in term generation and experimental results can skip this
    material and go directly to Section \ref{sec:lambda-term-gener}.
  \end{small}
\end{it}

\medskip

The problem of determining the asymptotic estimation of the number of closed terms of
a given size turns out to be a non-trivial task. Due to the unusual combinatorial
structure of $`l$-terms, such objects seem to resist methods developed in
combinatorics so far. There are a few papers devoted to this challenging problem
\cite{gittenberger-2011-ltbuh,
  DBLP:journals/corr/abs-0903-5505,DBLP:journals/tcs/Lescanne13}, however, none of
the methods used by now could provide the final solution.  \ifJFP Bodini \emph{et
  al.} \shortcite{gittenberger-2011-ltbuh} \else Bodini \emph{et
  al.}~\cite{gittenberger-2011-ltbuh} \fi use essentially analytic methods exploiting
the functional equation of Proposition~\ref{prop:genfun}
(Section~\ref{sec:generating-function}), whereas \ifJFP David \emph{et al.}
\shortcite{DBLP:journals/corr/abs-0903-5505} \else  David \emph{et al.}
\cite{DBLP:journals/corr/abs-0903-5505} \fi use upper and lower bound approximations
and \ifJFP Lescanne \shortcite{DBLP:journals/tcs/Lescanne13} \else
Lescanne~\cite{DBLP:journals/tcs/Lescanne13} \fi uses algebraic computations on
polynomials and power series.  Our approach can be considered as the development of
the previous research carried out by \ifJFP Lescanne
\shortcite{DBLP:journals/tcs/Lescanne13}\else \cite{DBLP:journals/tcs/Lescanne13}\fi.

For every $n \geq 0$, we associate with $T_{n,m}$ a polynomial $P_n(m)$ in $m$. First, let us define polynomials $P_n$ in the following recursive way:
\begin{eqnarray*}
 P_0(m) &=& m,\\
 P_{n+1}(m) &= & P_n(m+1) + \sum_{i=0}^{n} P_{i}(m) P_{n-i}(m).
\end{eqnarray*}

The sequence $\left( P_n(0) \right)_{n \geq 0}$ corresponds to the sequence $\left(
  T_{n,0} \right)_{n \geq 0}$ enumerating closed \mbox{$`l$-terms}.  The first nine
polynomials are given in Figure~\ref{fig:poly}.

\begin{figure*}
  \centering
  \begin{scriptsize}
  \begin{math}
    \begin{array}[c]{ c  l }
      \mathbf{n} &\mathit{\hspace{150pt}P_n}\\
\hline
      \mathbf{0} & m \\
      \mathbf{1} & m^2 + m + 1\\
      \mathbf{2} & 2 m^3 + 3 m^2 + 5 m + 3\\
\mathbf{3} & 5 m^4 + 10 m^3 + 22 m^2 + 25 m + 14\\
\mathbf{4} & 14 m^5 + 35 m^4 + 94 m^3 + 154 m^2 + 163 m + 82\\
\mathbf{5} & 42 m^6 + 126 m^5 + 396 m^4 + 838 m^3 + 1277 m^2 + 1235 m + 579\\
\mathbf{6} & 132 m^7 + 462 m^6 + 1654 m^5 + 4260 m^4 + 8384 m^3 + 11791 m^2 + 10707 m
+ 4741\\
\mathbf{7} & 429 m^8 + 1716 m^7 + 6868 m^6 + 20742 m^5 + 49720 m^4 + 90896 m^3 +
120628 m^2 + 104055 m + 43977\\
\mathbf{8} & 1430 m^9 + 6435 m^8 + 28396 m^7 + 98028 m^6 + 275886 m^5 + 617096 m^4 +
1068328 m^3 + 1352268 m^2 + 1117955 m + 454283\\
    \end{array}
  \end{math}
  \end{scriptsize}
  \caption{The first nine polynomials $P_n$}
  \label{fig:poly}
\end{figure*}

This means that the constant coefficient of a polynomial $P_n(m)$ is exactly the number of
closed $`l$-terms of size $n$.  

\begin{lemma}\label{lem:degree}
For every $n$, the degree of the polynomial $P_n$ is equal to $n+1$.
\end{lemma}

\begin{proof}
The result follows immediately by induction on $n$ from the definition of $P_n$.
\end{proof}

For $i>0$ and $n\geq 0$, let us denote by $\p{i}{n}$ the $i^{th}$ leading coefficient of the polynomial $P_n$, \ie we have
\[P_n(m) = \p{1}{n} m^{n+1} + \p{2}{n} m^n + \ldots + \p{i}{n} m^{n+2-i} + \ldots + \p{n+1}{n} m + \p{n+2}{n}.\]

\begin{lemma}\label{p-values}
For every $n \geq 0$ and $i > 0$,
\begin{eqnarray*}
\p{1}{0} &=& 1, \quad \p{i}{0} = 0 \ \textit{~for~} \ i>1,\\
\p{i}{n+1} &= & \sum_{j=0}^{i-2} {n+1-j \choose i-2-j} \p{j+1}{n} + \sum_{k=1}^i \sum_{j=0}^{n} \p{k}{j} \p{i+1-k}{n-j}.
\end{eqnarray*}
\end{lemma}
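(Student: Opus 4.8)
The plan is to read off $\p{i}{n+1}$, which by definition is the coefficient of $m^{(n+1)+2-i} = m^{n+3-i}$ in $P_{n+1}$, directly from the defining recurrence
\[P_{n+1}(m) = P_n(m+1) + \sum_{j=0}^{n} P_j(m)\,P_{n-j}(m),\]
treating the two summands separately and adopting the convention that $\p{k}{j} = 0$ whenever $k > j+2$, i.e. whenever one requests the coefficient of a monomial of degree exceeding $\deg P_j = j+1$ (legitimate by Lemma~\ref{lem:degree}). The base case is immediate: since $P_0(m) = m$, we have $\p{1}{0} = 1$ and every other coefficient vanishes, giving $\p{i}{0} = 0$ for $i > 1$.

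For the first summand I would write $P_n(m+1) = \sum_{k=1}^{n+2} \p{k}{n}\,(m+1)^{n+2-k}$ and expand each power by the binomial theorem; the coefficient of $m^{n+3-i}$ is then $\sum_{k} \p{k}{n}\binom{n+2-k}{n+3-i}$. The key manipulation is the symmetry $\binom{a}{b} = \binom{a}{a-b}$ with $a = n+2-k$, which turns $\binom{n+2-k}{n+3-i}$ into $\binom{n+2-k}{i-1-k}$; after the reindexing $k = j+1$ this becomes exactly the first sum $\sum_{j=0}^{i-2}\binom{n+1-j}{i-2-j}\,\p{j+1}{n}$ of the claim. The support of the binomial automatically restricts $k$ to $1 \le k \le i-1$, so I would check that terms outside this range contribute nothing; in particular, for $i=1$ the first sum is empty, consistent with $\deg P_n(m+1) = n+1 < n+2$.

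For the second summand I would observe that the monomial $\p{k}{j}\,m^{j+2-k}$ of $P_j$ multiplied by $\p{k'}{n-j}\,m^{(n-j)+2-k'}$ of $P_{n-j}$ has degree $n+4-k-k'$, and so contributes to $m^{n+3-i}$ precisely when $k+k' = i+1$. Extracting this coefficient from each product gives $\sum_{k=1}^{i}\p{k}{j}\,\p{i+1-k}{n-j}$, and swapping the order of summation over $j$ yields $\sum_{k=1}^{i}\sum_{j=0}^{n}\p{k}{j}\,\p{i+1-k}{n-j}$, which is the second sum of the claim verbatim. Adding the two contributions then gives the stated recurrence.

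I expect the only genuinely delicate point to be the bookkeeping around the binomial symmetry and the summation ranges: one must ensure that the ``out-of-range'' index values — where a requested coefficient lies beyond a polynomial's degree, or where a binomial coefficient vanishes — are handled consistently with the zero-extension convention, so that the sums may legitimately be written with the clean limits $\sum_{j=0}^{i-2}$ and $\sum_{k=1}^{i}$ of the statement. Everything beyond that is a routine comparison of coefficients.
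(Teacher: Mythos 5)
Your proposal is correct and takes essentially the same approach as the paper's proof: extracting the coefficient of $m^{n+3-i}$ separately from $P_n(m+1)$ via binomial expansion (the paper writes this as ${n+1 \choose i-2}\p{1}{n} + \cdots + {n+3-i \choose 0}\p{i-1}{n}$, which is your sum after the symmetry and reindexing $k=j+1$) and from the convolution $\sum_{j=0}^{n} P_j(m)P_{n-j}(m)$ by matching degrees $k+k'=i+1$. The out-of-range/zero-extension convention you flag is handled implicitly in the paper, so your version is if anything slightly more careful on the same argument.
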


\begin{proof}
Since $P_0(m)=m$, equations from the first line in the above lemma are trivial.

The $i^{th}$ leading coefficient in the polynomial $P_{n+1}(m)$ is equal to the sum of coefficients standing at $m^{n+3-i}$ in polynomials $P_n(m+1)$ and $\sum_{j=0}^{n} P_j(m) P_{n-j}(m)$.

The first of these polynomials, $P_n(m+1)$, is as follows:
\[\p{1}{n} (m+1)^{n+1} + \ldots + \p{i-1}{n} (m+1)^{n+3-i} + \ldots + \p{n+2}{n},\]
therefore the coefficient of $m^{n+3-i}$ in $P_n(m+1)$ is equal to
  \[{n+1 \choose i-2}\p{1}{n} + {n \choose i-3}\p{2}{n} + \ldots + {n+3-i \choose 0}\p{i-1}{n} \
  = \ \sum_{j=0}^{i-2} {n+1-j \choose i-2-j} \p{j+1}{n}.\]
In the case of the second polynomial, $\sum_{j=0}^{n} P_j(m) P_{n-j}(m)$, we have
\begin{eqnarray*}
\left( \p{1}{j} m^{j+1} + \ldots + \p{k}{j} m^{j+2-k} + \ldots + \p{j+2}{j} \right) \\
\cdot \left( \p{1}{n-j} m^{n-j+1} + \ldots + \p{i+1-k}{n-j} m^{n-j+1+k-i} + \ldots + \p{n-j+2}{n-j} \right),
\end{eqnarray*}
therefore the coefficient of $m^{n+3-i}$ in $\sum_{j=0}^{n} P_j(m) P_{n-j}(m)$ is equal to
\[\sum_{k=1}^i \sum_{j=0}^{n} \p{k}{j} \p{i+1-k}{n-j}. \]
\end{proof}
The next section proposes a combinatorial interpretation of the coefficients $\p{i}{j}$.

\section{Counting contexts}
\label{sec:an-comb-interpr}

In $`l$-calculus, an $i$-context is a closed term with $i$ holes.  Variables and
holes are similar in the sense that they can be replaced by terms.  But whereas a
variable may occur many times in a term and so may be replaced by terms at more than
one place at a time, a hole is anonymous, occurs once and only once (like a linear
variable in linear $`l$-calculus~\cite{DBLP:journals/corr/abs-cs-0501035}) and can be
filled only once.  Since as we said holes look like anonymous variables occurring
once we suppose that each hole has size~$0$ and we assume that the holes are numbered
$1, \ldots, i$ as they appear in the term from left to right.  For instance, if we
denote every hole by $[\ ]$, then $(`l \Var{1} [\ ])`l `l [\ ]\Var{2}$ is a
$2$-context of size $6$ and its holes are numbered as follows $(`l \Var{1} [\
]_{\bl{1}})`l `l [\ ]_{\bl{2}}\Var{2}$. $0$-contexts correspond to closed terms.
There is only one $1$-context of size~$0$ and there are no $i$-contexts of size~$0$
for $i\neq 1$.  Let us write $c_{n,i}$ for the number of $i$-contexts of size $n$.
Then we have
\begin{displaymath}
\left.\begin{array}{rcl}
c_{0,1} &=& 1\\
c_{0,i} &=& 0 \textrm{~for~} i \neq 1.
\end{array}\right\}(\dagger)
\end{displaymath}

Now, let us see how we construct an $i$-context of size $n+1$ from smaller ones.
\begin{description}
\item[By abstraction:] let us take a $j$-context (for $j \in [i..n+1]$) of size $n$ and add a new lambda above it. Then we choose a set of $j-i$ holes among the $j$ holes which we substitute by variables (or indices) abstracted by the new lambda.  For a fixed $j$ there are ${j \choose i}\,c_{n,j}$
  such $i$-contexts.  Finally, we sum these quantities over every $j$ from $i$ to $n+1$ to get the numbers
  of $i$-contexts constructed this way.
\item[By application:] let us apply a $j$-context of size $k$ to an $(i-j)$-context of size $n-k$ (for $j \in [0..i]$ and $k \in [0..n]$). This gives us an $i$-context of size $n+1$ since the application operator has size $1$. For fixed $j$ and $k$ there are $c_{k,j}c_{n-k,i-j}$ such $i$-contexts. Finally, we sum these numbers from $j=0$ to $j=i$ and from $k=0$ to $k=n$.
\end{description}
Hence, we get the following formula:
\[c_{n+1,i} \quad = \quad \sum_{j=i}^{n+1} {j \choose i}\,c_{n,j} + \sum_{j=0}^i
\sum_{k=0}^nc_{k,j} c_{n-k,i-j}. \qquad (\star)\] %
Let us see how we can build terms from contexts.  Recall that, by construction, an
$i$-context has only holes and no free index, which means that all the indices are
bound.  Therefore to build a term of size $n$ with $i$ occurrences of free indices
taken among $m$ ones from an $i$-context of size $n$ and a map $f$ from $[1..i]$ to
$[1..m]$, we insert the index $f(j)$ in the $j^{th}$ hole.  There are $c_{n,i} m^i$
such terms.  Therefore
\[T_{n,m} \quad = \quad c_{n,n+1} m^{n+1} + \ldots + c_{n,i} m^i + \ldots + c_{n,0}\] is the
number of $\lambda$-terms of size $n$ with at most $m$ distinct free variables, which is the
polynomial~$P_n(m)$.   In particular, $c_{n,n+2-i} = \p{i}{n}$.   This can be written as follows:
\begin{displaymath}
  P_n(m) = \sum_{i=0}^{n+1} c_{n,i} m^i.
\end{displaymath}
The coefficients $c_{n,i}$ of the polynomials $P_n$'s count the $i$-contexts
of size $n$.  We see that $c_{n,i}=0$ when $i>n+1$.

\paragraph{The case $i=n+2$.}
\label{sec:case-i=n+1}

In the case when $i=n+2$, using the fact that $c_{n,i}=0$ for $i>n+1$,  the
equations $(\dagger)$ and $(\star)$ boil down to:
\begin{eqnarray*}
c_{0,1} &=& 1\\
c_{n+1,n+2} & = & \sum_{k=0}^n c_{k,k+1} c_{n-k,n-k+1},
\end{eqnarray*}
which is characteristic of the Catalan numbers.   Indeed, $(n+1)$-contexts of size $n$
have only applications and no abstractions and are therefore binary trees.

\subsection{The generating function for $(c_{n,i})_{n,i \in \nat}$}
\label{sec:generating-function}

\begin{prop}\label{prop:genfun}
  Consider the bivariate generating function 
  \( \displaystyle L(z,u) \ = \sum_{n,i\ge 0} c_{n,i} z^n u^i.\) Then \[{L(z,u)=u + z
    L(z,u+1) + z L(z,u)^2}.\]
\end{prop}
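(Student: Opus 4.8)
The plan is to prove the functional equation by translating the three parts of the defining recurrence $(\dagger)$--$(\star)$ for the numbers $c_{n,i}$ directly into operations on the bivariate generating function $L(z,u) = \sum_{n,i \geq 0} c_{n,i} z^n u^i$, and then to match coefficients of $z^n u^i$ on both sides. Since for each fixed power of $z$ the coefficient of $z^n$ in $L(z,u)$ is the polynomial $P_n(u) = \sum_i c_{n,i} u^i$ (with $c_{n,i}=0$ for $i > n+1$), the substitution $u \mapsto u+1$ is well-defined and every manipulation below is legitimate at the level of formal power series in $z$ whose coefficients are polynomials in $u$.

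First I would handle the application summand of $(\star)$. Expanding the square by the Cauchy product shows that the coefficient of $z^n u^i$ in $L(z,u)^2$ is exactly $\sum_{k=0}^n \sum_{j=0}^i c_{k,j}\, c_{n-k,i-j}$; multiplying by $z$ raises the $z$-degree by one, so $z L(z,u)^2$ contributes precisely the double sum appearing in the application part of $(\star)$ to the coefficient of $z^{n+1} u^i$.

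The delicate step is the abstraction summand, which carries the binomial factor $\binom{j}{i}$. Here I would use the binomial theorem $(u+1)^j = \sum_{i=0}^j \binom{j}{i} u^i$ to compute
\[
z L(z, u+1) \;=\; \sum_{n,j} c_{n,j}\, z^{n+1} (u+1)^j \;=\; \sum_{n,i} \left( \sum_{j \geq i} \binom{j}{i} c_{n,j} \right) z^{n+1} u^i .
\]
The point requiring care is that this inner sum ranges over all $j \geq i$, whereas $(\star)$ truncates it at $j = n+1$; the two agree because $c_{n,j} = 0$ for $j > n+1$, so the shift $u \mapsto u+1$ captures the abstraction rule exactly. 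I expect this identification of the binomial transform with the substitution $u \mapsto u+1$ to be the main conceptual obstacle, the remainder being bookkeeping.

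Finally I would assemble the pieces. The standalone term $u$ on the right-hand side accounts for the initial data $(\dagger)$: it contributes $1$ to the coefficient of $z^0 u^1$ and nothing else, matching $c_{0,1}=1$ and $c_{0,i}=0$ for $i \neq 1$, while neither $z L(z,u+1)$ nor $z L(z,u)^2$ has any term of $z$-degree $0$. For $n \geq 0$, collecting the contributions of $z L(z,u+1)$ and $z L(z,u)^2$ to the coefficient of $z^{n+1} u^i$ reproduces exactly the right-hand side of $(\star)$, hence equals $c_{n+1,i}$. Matching all coefficients of $z^n u^i$ then yields the claimed identity $L(z,u) = u + z L(z,u+1) + z L(z,u)^2$.
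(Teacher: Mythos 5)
Your proof is correct, and it follows a mildly but genuinely different route from the paper's. The paper proves the identity in three lines of formal manipulation at the level of the polynomials $P_n$: writing $L(z,u)=\sum_{n\ge 0}P_n(u)z^n$ (since $P_n(u)=\sum_i c_{n,i}u^i$), it substitutes the already-established polynomial recurrence $P_{n+1}(u)=P_n(u+1)+\sum_{k=0}^{n}P_k(u)P_{n-k}(u)$ with $P_0(u)=u$, and the functional equation falls out wholesale, with no binomial coefficients ever appearing. You instead verify the equation directly against the context recurrence $(\dagger)$--$(\star)$ by extracting the coefficient of $z^{n+1}u^i$ from each of the three summands; the abstraction term forces you to expand $(u+1)^j$ and to observe that the unbounded sum $\sum_{j\ge i}\binom{j}{i}c_{n,j}$ agrees with the truncated sum in $(\star)$ because $c_{n,j}=0$ for $j>n+1$. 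The two arguments are equivalent -- the binomial transform you compute by hand is exactly what the substitution $u\mapsto u+1$ performs at the polynomial level, since $P_n(u+1)=\sum_i\bigl(\sum_{j\ge i}\binom{j}{i}c_{n,j}\bigr)u^i$ -- but each buys something: the paper's version is shorter and cleaner because it reuses the polynomial recurrence as a black box, while yours is self-contained from the combinatorial recurrence for contexts and makes explicit two points the paper glosses over, namely why the substitution $u\mapsto u+1$ is legitimate (each coefficient of $z^n$ is a polynomial in $u$, so no convergence issue arises) and how the initial data $(\dagger)$ is absorbed by the standalone term $u$.
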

\begin{proof}{}
  Notice that
  \begin{eqnarray*}
    L(z,u) &=& \sum_{n =0}^{\infty} \left(\sum_{i=0}^{\infty}c_{n,i}  u^i\right)z^n\\
    &=& \sum_{n =0}^{\infty} P_n(u) z^n\\
    &=& u + z \sum_{n =0}^{\infty} P_{n+1}(u)z^n\\
    &=& u + z\sum_{n =0}^{\infty} P_n(u+1) z^n+  z \sum_{n =0}^{\infty} \sum_{k=0}^n P_{k}(u)
    P_{n-k}(u) z^n\\
    &=& u + z L(z,u+1) + z L(z,u)^2.
  \end{eqnarray*}
\end{proof}
A similar equation was known from Bodini, Gardy and Gittenberger
\cite{gittenberger-2011-ltbuh} (for variable size~$1$).  However, notice
that what they call $\mathcal{L}$ is not the class of open \mbox{$`l$-terms,} but the class
of $i$-contexts.  Notice that the function $L(z,0)$ is the generating function for the
number of closed terms of size $n$. 

The equation
\begin{displaymath}
  z L(z,u)^2 - L(z,u)  + u + z L(z,u+1) = 0
\end{displaymath}
has the following solution
\begin{eqnarray*}
  L(z,u) &=& \frac{1- \sqrt{1-4z (u + z L(z,u+1))}}{2z}.
\end{eqnarray*}
 Let us state
\[M(z,u) = 2z \, L(z,u) .\]
Then
\[M(z,u) = 1- \sqrt{1-4 z u - 2z M(z,u+1)}\]
and hence
\begin{footnotesize}
  \[\hspace{-.4cm}M(z,0) = 1- \sqrt{1 - 2z(1- \sqrt{1-4z - 2z(1- \sqrt{1-8z - 2z(1-
        \sqrt{1-12z - 2z(1-
        \sqrt{1-16z - \ldots })})})})}.\]
\end{footnotesize}

\subsection{The asymptotic behavior of $T_{n,0}$}
\label{sec:two}

The function $M(z,u)$ has a singularity $z_u$ for $1-4z_u u - 2z_u M(z_u,u+1) = 0$, in addition to
the singularities of $M(z,u+1)$.  Notice that since $z_u M(z_u,u+1)>0$, we get $z_u<\frac{1}{4u}$.  Therefore
$L(z,0)$ has a sequence of singularities $(z_u)_{u`:\nat}$ which tends to $0$.  Thus
the radius of convergence of $L(z,0)$ is 0.  Recall that a fundamental theorem on analytic
functions connects the radius of convergence of a generating function with the
exponential growth of its coefficients (see Section IV.3 \emph{Singularities and
  exponential growth of the coefficients} in \ifJFP {Flajolet and Sedgewick's book}
\shortcite{flajolet08:_analy_combin}\else{Flajolet and Sedgewick's book
  \cite{flajolet08:_analy_combin}}\fi).  This theorem says that if a generating
function has a radius of convergence $R$, then its coefficients grow like
$\left(\frac{1}{R}\right)^n$.  This means that if the radius of convergence is $0$,
then the coefficients grow faster that $a^n$ for any $a`:\ensuremath{\mathbb{R}}$.
Such a behavior is called superexponential.

\section{Three formulas for counting closed terms}
\label{sec:three-formulas}

We have found three formulas to compute the number of closed terms of size $n$.  Let
us summarize them.  In what follows the bracketed notation $[k=j]$ is the function
which is~$1$ if $k=j$ and $0$ if $k\neq j$.

\paragraph{Case $m=0$ for terms with at most $m$ distinct free variables}~
\label{sec:case_atmost}

$T_{n,0}$ where
\begin{eqnarray*}
 T_{0,m} &=& m\\
 T_{n+1,m} & =& T_{n,m+1} + \sum_{i=0}^{n} T_{i,m} T_{n-i,m}.
\end{eqnarray*}
This formula is clearly the simplest.  Its simplicity, one sum and no binomial,
allows it to be unfolded and used as a basis for programming a term generator (see
Section~\ref{sec:lambda-term-gener}).

\paragraph{Case $m=0$ for terms with exactly $m$ distinct free variables}~
\label{sec:case_exactly}

$f_{n,0}$ where
\begin{eqnarray*}
f_{0,m} & =& [m = 1] \\
f_{n,m} &=& 0 \textrm{~ if}~ m>n+1\\
f_{n+1,m} &= &f_{n,m} + f_{n,m+1} + \\
&& \sum_{p=0}^{n} \sum_{c=0}^{m} \sum_{k=0}^{m - c} {m \choose c} {m - c\choose k} f_{p,k+c} f_{n-p,m-k}.
\end{eqnarray*}
This formula is the most complex.  
Appendix~\ref{sec:formula} shows how it is constructed.



\paragraph{$0$-contexts}~
\label{sec:0-contexts}

$c_{n,0}$ where
  \begin{eqnarray*}
    c_{0,i} &=& [i = 1]\\
    c_{n+1,i} & = & \sum_{j=i}^{n+1} {j \choose i}\,c_{n,j} + \sum_{j=0}^i\sum_{k=0}^nc_{k,j} c_{n-k,i-j}.
  \end{eqnarray*}

\paragraph{Four relations.}
\label{sec:four-relations}

\begin{sloppypar}
  Let us use the notation $R^{(m)}_i$ (see \ifJFP \cite{flajolet08:_analy_combin}\else{Flajolet and Sedgewick's book \cite{flajolet08:_analy_combin}}\fi) for the
  number of surjections from $[1..i]$ to $[1..m]$.  Recall that
\end{sloppypar}
\[R^{(m)}_i \quad = \quad \sum_{j=0}^i {i \choose j} (-1)^j (i-j)^m.\]

The numbers $T_{n,m}$, $f_{n,m}$ and $c_{n,i}$ are related as follows (see
Appendix~\ref{sec:relat-betw}):
\begin{displaymath}
  \begin{array}{lclcl}
    T_{n,m} &=& \displaystyle \sum_{i=0}^{m} {m \choose i} f_{n,i} & = &
    \displaystyle \sum_{i=1}^{n+1} c_{n,i}    m^i\\\\
    f_{n,m} &=& \displaystyle \sum_{i=0}^m (-1)^{m+i} {m \choose i} T_{n,i} & = &
    \displaystyle \sum_{i=1}^{n+1}     c_{n,i} R^{(m)}_i.
  \end{array}
\end{displaymath}

\section{More generating functions}
\label{sec:generating}

In this section, we provide the asymptotic approximation of the growth of the
$k^{th}$ coefficients of the polynomials $P_n$, where the first coefficient is the
coefficient of the monomial of highest degree.

For every positive integer $i$, let us denote by $a_i$ the generating function for the sequence $\left( \p{i}{n} \right)_{n \geq 0}$, \ie
\[ a_i(z) = \sum_{n=0}^{\infty} \p{i}{n} z^n = \sum_{n=0}^{\infty} c_{n,n+2-i} z^n .\] The
$\p{i}{n}$'s count the number of contexts of size $n$ having $n+2-i$ holes. For the sake of clarity, instead of writing $a_i(z)$ sometimes we simply write
$a_i$.

In order to compute the functions $a_i$, we apply the following basic fact about generating functions.

\begin{fact}\label{basicGen}
Let $f$ and $g$ be generating functions for sequences $\left( f_n \right)_{n \geq 0}$ and $\left( g_n \right)_{n \geq 0}$, respectively. Then
\begin{enumerate}[(i)]
\item the generating function for the sequence $\left( {n \choose k}f_n \right)_{n \geq 0}$, where $k$ is a fixed positive integer, is given by $\frac{z^k f^{(k)}}{k!}$,
\item the generating function for the sequence $\left( \sum_{i=0}^n f_i g_{n-i} \right)_{n \geq 0}$ is given by $f \cdot g$,
\item the generating function for the sequence $\left( {n-j \choose i} f_n \right)_{n \geq 0}$, where $i \geq 0$ and $j > 0$, is given by $\sum_{k=0}^{i} (-1)^k {k+j-1 \choose j-1}z^{i-k} \frac{f^{(i-k)}}{(i-k)!}$.
\end{enumerate}
\end{fact}

\begin{proof}
Items (i) and (ii) can be found, e.g., in Chapter 7 of \cite{GraKnuPat}.

The third part follows from (i) and the following equality:
\[{n-j \choose i} = \sum_{k=0}^{i} (-1)^k {n \choose i-k} {k+j-1 \choose j-1},\]
which holds for every $n,i \geq 0$ and $j > 0$. This equality can be easily derived from
two equalities known as ``upper negation'' and ``Vandermonde convolution'', which can be
found in Table 174 of \cite{GraKnuPat}.
\end{proof}

Now we are ready to provide a recurrence for functions $a_i$.

\begin{theo}\label{functionsRel}
The following equations are valid:
\begin{eqnarray*}
 a_1 &=& za_1^2 + 1, \quad a_1(0)=1\\
 a_2 &=& za_1 + 2za_1a_2\\
 a_i &=& z^{i-1}\frac{a_1^{(i-2)}}{(i-2)!} + z^{i-2}\frac{a_1^{(i-3)}}{(i-3)!} + z^{i-2}\frac{a_2^{(i-3)}}{(i-3)!} \\
      &&+ z \cdot \sum_{j=1}^{i-3} \sum_{k=0}^{i-3-j} (-1)^k {k+j-1 \choose j-1} z^{i-3-j-k} \frac{a_{j + 2}^{(i-3-j-k)}}{(i-3-j-k)!} \\
      &&+ z \cdot \sum_{j=1}^{i} a_j a_{i-j+1}, \qquad \textit{~for~}  i > 2.
 \end{eqnarray*}
\end{theo}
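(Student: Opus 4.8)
The plan is to translate the coefficient recurrence of Lemma~\ref{p-values} into the language of generating functions by applying Fact~\ref{basicGen} term by term. The starting observation is that, since $\p{1}{0}=1$ and $\p{i}{0}=0$ for $i>1$, multiplying the recurrence by $z^n$ and summing over $n\ge 0$ gives on the left-hand side
\[
\sum_{n\ge 0}\p{i}{n+1}z^n=\frac{a_i-\p{i}{0}}{z},
\]
which equals $(a_1-1)/z$ when $i=1$ and $a_i/z$ when $i>1$. Thus it suffices to compute the generating function of each summand on the right-hand side of Lemma~\ref{p-values} and then multiply through by $z$.

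First I would dispatch the convolution part. For every fixed $k$, the sequence $\bigl(\sum_{j=0}^{n}\p{k}{j}\p{i+1-k}{n-j}\bigr)_{n\ge 0}$ has generating function $a_k a_{i+1-k}$ by Fact~\ref{basicGen}(ii); summing over $k$ from $1$ to $i$ and multiplying by $z$ yields the term $z\sum_{j=1}^{i}a_j a_{i-j+1}$. This already settles the cases $i=1$ (where the binomial sum is empty, giving $a_1=za_1^2+1$ with $a_1(0)=\p{1}{0}=1$) and $i=2$ (where the single binomial term $\binom{n+1}{0}\p{1}{n}=\p{1}{n}$ contributes $za_1$, giving $a_2=za_1+2za_1a_2$).

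The substantial work lies in the first sum $\sum_{j=0}^{i-2}\binom{n+1-j}{i-2-j}\p{j+1}{n}$, which I would split according to the value of $j$. For $j=0$ the binomial $\binom{n+1}{i-2}$ is not directly in the scope of Fact~\ref{basicGen}(i), so I would first apply Pascal's rule $\binom{n+1}{i-2}=\binom{n}{i-2}+\binom{n}{i-3}$ and then Fact~\ref{basicGen}(i) to obtain the two leading terms $z^{i-2}a_1^{(i-2)}/(i-2)!+z^{i-3}a_1^{(i-3)}/(i-3)!$. The case $j=1$ gives $\binom{n}{i-3}\p{2}{n}$, whose generating function is $z^{i-3}a_2^{(i-3)}/(i-3)!$ by Fact~\ref{basicGen}(i) directly. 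For $j\ge 2$ I would write $n+1-j=n-(j-1)$ with $j-1>0$ and invoke Fact~\ref{basicGen}(iii) with parameters $(i-2-j,\,j-1)$ (its upper index and its shift, respectively), producing $\sum_{k=0}^{i-2-j}(-1)^k\binom{k+j-2}{j-2}z^{i-2-j-k}a_{j+1}^{(i-2-j-k)}/(i-2-j-k)!$; reindexing by $j'=j-1$ then turns the sum over $j\in[2..i-2]$ into the stated double sum $\sum_{j=1}^{i-3}\sum_{k=0}^{i-3-j}(-1)^k\binom{k+j-1}{j-1}z^{i-3-j-k}a_{j+2}^{(i-3-j-k)}/(i-3-j-k)!$. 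Multiplying all four contributions by $z$ produces exactly the first four summands in the claimed formula for $a_i$.

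The main obstacle I anticipate is purely bookkeeping: ensuring that the $j=0$ term is correctly reshaped by Pascal's rule so that Fact~\ref{basicGen}(i) applies (since Fact~\ref{basicGen}(iii) is only valid for a strictly positive shift), and that the reindexing in the $j\ge 2$ case lines the binomial coefficient and the derivative order up with the exact form in the statement. No deeper analytic input is needed; the identity is a direct consequence of Lemma~\ref{p-values} together with the three transfer rules of Fact~\ref{basicGen}.
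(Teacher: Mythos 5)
Your proposal is correct and follows exactly the route the paper intends: its proof of Theorem~\ref{functionsRel} is the one-line remark that the equations ``follow from Lemma~\ref{p-values} and Fact~\ref{basicGen}'', and you have simply carried out that computation in full, with the right dispatching of the convolution via Fact~\ref{basicGen}(ii), the Pascal-rule split of the $j=0$ term $\binom{n+1}{i-2}$ (needed since Fact~\ref{basicGen}(iii) requires a strictly positive shift), the direct use of (i) for $j=1$, and the correct reindexing $j'=j-1$ in the $j\ge 2$ range. All the bookkeeping checks out, including the cases $i=1,2$ and the initial condition $a_1(0)=\p{1}{0}=1$.
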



\begin{proof}
All these equations follow from Lemma~\ref{p-values} and Fact~\ref{basicGen}.
\end{proof}

Notice that the $a_i$'s can be computed by induction. Indeed, $a_i$ occurs twice in
the last sum on the right hand side of the last equation and we have:
\begin{eqnarray*}
  a_i (1-2a_1z) &=& z^{i-1}\frac{a_1^{(i-2)}}{(i-2)!} + z^{i-2}\frac{a_1^{(i-3)}}{(i-3)!} + z^{i-2}\frac{a_2^{(i-3)}}{(i-3)!} \\
      &&+ z \cdot \sum_{j=1}^{i-3} \sum_{k=0}^{i-3-j} (-1)^k {k+j-1 \choose j-1} z^{i-3-j-k} \frac{a_{j + 2}^{(i-3-j-k)}}{(i-3-j-k)!} \\
      &&+ z \cdot \sum_{j=2}^{i-1} a_j a_{i-j+1}.
\end{eqnarray*}
Since $1-2a_1 z= \sqrt{1-4z}$, we get:
\begin{displaymath}
a_i = \left(
  \begin{array}{l}
    z^{i-1}\frac{a_1^{(i-2)}}{(i-2)!} + z^{i-2}\frac{a_1^{(i-3)}}{(i-3)!} + z^{i-2}\frac{a_2^{(i-3)}}{(i-3)!} \\
      + z \cdot \sum_{j=1}^{i-3} \sum_{k=0}^{i-3-j} (-1)^k {k+j-1 \choose j-1} z^{i-3-j-k} \frac{a_{j + 2}^{(i-3-j-k)}}{(i-3-j-k)!} \\
      + z \cdot \sum_{j=2}^{i-1} a_j a_{i-j+1}
    \end{array}
  \right) / \sqrt{1-4z}. \qquad (\ddagger)
    \end{displaymath}


\begin{corollary}\label{functions}
Exact formulas for the functions $a_1$--$a_7$ are given in Figure \ref{fig:gen}.
\begin{figure*}
\newcommand{\RHOF}{\sqrt{1-4z}}
  \centering
    \begin{eqnarray*}
      a_1(z) &=& \left( \frac{1}{2}-\frac{(1-4z)^{1/2}}{2}\right) z^{-1}\\
      a_2(z) &=& -\frac{1}{2} + \frac{1}{2\;(1-4z)^{1/2}}\\
      a_3(z) &=& \left( \frac{1}{1-4z} + \frac{z}{(1-4z)^{3/2}} \right) z\\
      a_4(z) &=& \left( \frac{3}{(1-4z)^2} + \frac{z}{(1-4z)^{5/2}} \right) z^2\\
      a_5(z) &=& \left( \frac{4z+9}{(1-4z)^3} + \frac{z^2 - 19z + 5}{(1-4z)^{7/2}} \right) z^3\\   	
      a_6(z) &=& \left( \frac{24z+31}{(1-4z)^4} + \frac{3z^2 - 203z + 51}{(1-4z)^{9/2}} \right) z^4\\
      a_7(z) &=& \left( \frac{16z^2 - 128z + 181}{(1-4z)^5} + \frac{2z^3 - 194z^2 - 1541z + 398}{(1-4z)^{11/2}} \right) z^5
    \end{eqnarray*}
    \caption{The generating functions for the coefficients of the polynomials $P_n(m)$}
\label{fig:gen}
\end{figure*}
\end{corollary}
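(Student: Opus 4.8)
The plan is to proceed by induction on $i$, using the recurrence of Theorem~\ref{functionsRel} — or, more conveniently, its solved form $(\ddagger)$, in which $a_i$ is expressed explicitly in terms of $a_1,\dots,a_{i-1}$ and their derivatives. The guiding observation is that every $a_i$ lives in the differential field $\mathbb{Q}(z,\sqrt{1-4z})$, which is closed under differentiation since $\frac{d}{dz}\sqrt{1-4z} = -2/\sqrt{1-4z}$ lies in it; hence each inductive step stays within the class of functions displayed in Figure~\ref{fig:gen}, and the whole computation is purely symbolic. This explains the shape of the answers: denominators that are integer or half-integer powers of $(1-4z)$, with polynomial numerators in $z$.

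First I would settle the two base cases directly from the algebraic relations rather than from $(\ddagger)$. The equation $a_1 = z a_1^2 + 1$ is a quadratic $z a_1^2 - a_1 + 1 = 0$ whose two roots are $\frac{1 \pm \sqrt{1-4z}}{2z}$; the side condition $a_1(0)=1$ forces the minus sign, giving the stated $a_1 = \left(\tfrac12 - \tfrac12\sqrt{1-4z}\right)z^{-1}$ (the Catalan generating function, as anticipated in Section~\ref{sec:case-i=n+1}). For $a_2$, the relation $a_2 = z a_1 + 2 z a_1 a_2$ rearranges to $a_2(1 - 2 a_1 z) = z a_1$, and substituting the identity $1 - 2 a_1 z = \sqrt{1-4z}$ — an immediate consequence of the formula for $a_1$ — yields $a_2 = z a_1/\sqrt{1-4z}$, which simplifies to the stated $-\tfrac12 + \tfrac12(1-4z)^{-1/2}$.

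For the inductive step ($i>2$) I would feed the already-established closed forms for $a_1,\dots,a_{i-1}$ into $(\ddagger)$. This requires differentiating those forms to the orders dictated by the recurrence — up to order $i-2$ for $a_1$, order $i-3$ for $a_2$, and correspondingly lower orders for the remaining $a_j$ — then collecting the leading ``abstraction'' terms, evaluating the inner double sum weighted by the binomial coefficients $\binom{k+j-1}{j-1}$ coming from Fact~\ref{basicGen}(iii), adding the convolution contribution $z\sum_{j=2}^{i-1} a_j a_{i-j+1}$, and finally dividing by $\sqrt{1-4z}$. Since each differentiation only raises the exponent of $(1-4z)$ in the denominator, the result is again of the form (polynomial in $z$)$/(1-4z)^{k}$ plus (polynomial in $z$)$/(1-4z)^{k+1/2}$, matching the shape in Figure~\ref{fig:gen}; carrying this out for $i=3,4,5,6,7$ produces exactly the seven displayed formulas.

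The main obstacle is not conceptual but the sheer bookkeeping of the symbolic simplification: as $i$ grows, the number of derivative terms increases, the half-integer powers of $(1-4z)$ proliferate, and placing everything over the common denominator $\sqrt{1-4z}$ while reducing to the compact two-term form requires careful handling of the $\binom{k+j-1}{j-1}$-weighted sums together with repeated applications of the product and quotient rules. In practice this step is best delegated to a computer-algebra system — the paper already relies on \textsf{Sage} — with the recurrence $(\ddagger)$ serving as the certificate that the resulting expressions are correct.
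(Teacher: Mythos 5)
Your proposal is correct and follows essentially the same route as the paper: the paper likewise obtains $a_1$ by solving the quadratic $a_1 = za_1^2+1$ with $a_1(0)=1$ (identifying the Catalan generating function), and then derives $a_2$ through $a_7$ from the recurrence of Theorem~\ref{functionsRel} by ``tedious, however elementary'' symbolic computations carried out in \textsf{Sage}, just as you propose via $(\ddagger)$. Your additional observations --- the explicit derivation of $a_2$ using $1-2a_1z=\sqrt{1-4z}$ and the remark that everything stays in the differentiation-closed field $\mathbb{Q}(z,\sqrt{1-4z})$, which explains the two-term shape of the answers --- are sound refinements of the same argument rather than a different approach.
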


\begin{proof}
Let us first compute the function $a_1$ which, according to Theorem \ref{functionsRel}, is given by
\[a_1 = za_1^2 + 1, \quad a_1(0)=1.\] By solving this equation, we obtain $a_1(z) =
\frac{1-\sqrt{1-4z}}{2z}$, which is exactly the generating function for Catalan
numbers---see, e.g., Chapter I.1 of \cite{flajolet08:_analy_combin}.

Now, let us notice that on the basis of Theorem \ref{functionsRel} all the other functions can be immediately obtained by tedious, however elementary, computations. In order to get exact values we applied \textsf{Sage} software \cite{sage}.
\end{proof}

Let $[z^n]f(z)$ denote the $n^{th}$ coefficient of $z^n$ in the formal power series
$f(z) = \sum_{n=0}^{\infty} f_n z^n$. As usual, we use the symbol $\sim$ to denote
the asymptotic equivalence of two sequences, i.e., we write $f_n \sim g_n$ iff the
limit of the sequence $(f_n/g_n)_{n\geq 0}$ is $1$. Similarly, by $f(z) \simtozero g(z)$ we mean that the limit of $f(z)/g(z)$ is $1$ when $z \to z_0$.
We say that a function $f(z)$ is of order $g(z)$ for $z \to z_0$ iff there exists a
positive constant $A$ such that $f(z) \simtozero A\cdot g(z)$.

The theorem below (Theorem VI.1 of \cite{flajolet08:_analy_combin}) serves as a powerful tool that allows us to estimate coefficients of certain functions that frequently appear in combinatorial considerations.

\begin{fact}\label{fact:asym_exp}
Let $\alpha$ be an arbitrary complex number in $\mathbb{C}\setminus \mathbb{Z}_{\leq 0}$. The coefficient of $z^n$ in
\[ f(z) = (1-z)^{\alpha} \]
admits the following asymptotic expansion:
\begin{eqnarray*}
[z^n]f(z) &\sim& \frac{n^{\alpha - 1}}{\Gamma (\alpha)} \left( 1 + \frac{\alpha (\alpha -
    1)}{2n}  + \frac{`a(`a-1)(`a-2)(3`a-1)}{24n^2}\right. \\
&& \qquad\qquad \left. + \frac{`a^2(`a-1)^2(`a -2)(`a-3)}{48n^3}+ O \left( \frac{1}{n^{4}} \right) \right) ,
\end{eqnarray*}

where $\Gamma$ is the Euler Gamma function defined for $\Re(\alpha) > 0$ as
\[ \Gamma ( \alpha ) := \int_{0}^{\infty} e^{-t} t^{\alpha -1 } dt .\]
\end{fact}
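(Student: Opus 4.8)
The plan is to reduce the statement to an elementary evaluation of the coefficient followed by the classical asymptotics of a ratio of two Gamma functions. As usual in this context (this is Theorem~VI.1 of \cite{flajolet08:_analy_combin}), the standard function whose coefficients are positive and match the announced right-hand side $n^{\alpha-1}/\Gamma(\alpha)$ is $f(z)=(1-z)^{-\alpha}$, so I would carry out the argument in that form. First I would write down the exact coefficient. By the generalized binomial theorem,
\[
[z^n](1-z)^{-\alpha} \;=\; \binom{n+\alpha-1}{n} \;=\; \frac{\alpha(\alpha+1)\cdots(\alpha+n-1)}{n!},
\]
and, using the functional equation $\Gamma(x+1)=x\,\Gamma(x)$ to telescope the numerator, this equals
\[
[z^n](1-z)^{-\alpha} \;=\; \frac{\Gamma(n+\alpha)}{\Gamma(\alpha)\,\Gamma(n+1)}.
\]
The hypothesis $\alpha\in\mathbb{C}\setminus\mathbb{Z}_{\le 0}$ is exactly what guarantees that $\Gamma(\alpha)$ is finite and nonzero, so this leading constant is well defined; for $\alpha\in\mathbb{Z}_{\le 0}$ the function is a polynomial and the expansion degenerates.

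Since the factor $1/\Gamma(\alpha)$ is independent of $n$, the whole problem reduces to the asymptotic expansion of the ratio $\Gamma(n+\alpha)/\Gamma(n+1)$ as $n\to\infty$. For this I would apply Stirling's series
\[
\log\Gamma(n+s) \;=\; \bigl(n+s-\tfrac12\bigr)\log(n+s) \;-\; (n+s) \;+\; \tfrac12\log(2\pi) \;+\; \frac{1}{12(n+s)} \;+\; O\!\bigl(n^{-3}\bigr)
\]
with $s=\alpha$ and $s=1$, subtract, and expand $\log(n+s)=\log n + s/n - s^2/(2n^2)+\cdots$ in powers of $1/n$. Collecting terms gives
\[
\log\frac{\Gamma(n+\alpha)}{\Gamma(n+1)} \;=\; (\alpha-1)\log n \;+\; \frac{\alpha(\alpha-1)}{2n} \;+\; O\!\bigl(n^{-2}\bigr),
\]
and exponentiating yields $n^{\alpha-1}\bigl(1+\tfrac{\alpha(\alpha-1)}{2n}+\cdots\bigr)$, which already reproduces the leading term $n^{\alpha-1}/\Gamma(\alpha)$ and the first correction claimed in the statement.

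The main obstacle is purely computational: to obtain the $n^{-2}$ and $n^{-3}$ coefficients, namely $\tfrac{1}{24}\alpha(\alpha-1)(\alpha-2)(3\alpha-1)$ and $\tfrac{1}{48}\alpha^2(\alpha-1)^2(\alpha-2)(\alpha-3)$, I must carry the Stirling series one further term (the $-\tfrac{1}{360}(n+s)^{-3}$ contribution), keep the logarithmic expansions through order $1/n^3$, and then expand the exponential of the resulting series to the same order, tracking how its cross terms combine. This bookkeeping is tedious but elementary, and the outcome is a set of polynomials in $\alpha$ that one checks against the stated expressions; a symbolic computation in \textsf{Sage}, or a comparison with the Tricomi--Erd\'elyi expansion of $\Gamma(n+a)/\Gamma(n+b)$ in terms of generalized Bernoulli polynomials (whose basic identities are tabulated in \cite{GraKnuPat}), removes any doubt.

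Finally I would note a more robust and conceptual route, which is the one followed in \cite{flajolet08:_analy_combin} and which extends to the logarithmic variants $(1-z)^{-\alpha}\log^k\tfrac{1}{1-z}$ where the coefficient is no longer a clean ratio of Gamma functions. One writes $[z^n]f(z)=\frac{1}{2\pi i}\oint f(z)\,z^{-n-1}\,dz$, deforms the contour to a Hankel contour around the singularity $z=1$, substitutes $z=1+t/n$, and expands the integrand in powers of $1/n$. The reciprocal Gamma function then emerges through Hankel's representation $\frac{1}{\Gamma(\alpha)}=\frac{1}{2\pi i}\int_{\mathcal H} e^{t}t^{-\alpha}\,dt$, and successive terms of the expansion produce exactly the same asymptotic series with uniform control of the error. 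Either derivation establishes the claimed expansion.
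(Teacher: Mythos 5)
The paper itself offers no proof of this Fact: it is quoted directly as Theorem~VI.1 of \cite{flajolet08:_analy_combin}, so your derivation is being compared against the cited source rather than against an in-paper argument. Your proposal is correct, and you deserve credit for spotting the sign defect in the statement: as literally written, with $f(z)=(1-z)^{\alpha}$, the claim is false, since $[z^n](1-z)^{\alpha}=(-1)^n\binom{\alpha}{n}\sim n^{-\alpha-1}/\Gamma(-\alpha)$; the intended function is $(1-z)^{-\alpha}$, a typo the paper propagates consistently (its application writes $[z^n](1-z)^{(2k-3)/2}$ where the exponent should be negative, since $a_k(z)$ blows up like $(1-4z)^{-(2k-3)/2}$). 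Your primary route --- the exact evaluation $[z^n](1-z)^{-\alpha}=\Gamma(n+\alpha)/\bigl(\Gamma(\alpha)\Gamma(n+1)\bigr)$ followed by Stirling's series for the Gamma ratio --- is the elementary proof available for this pure power case precisely because the coefficient is a closed-form Gamma ratio; it buys a short, self-contained argument, and your explicit check of the leading term and the $\alpha(\alpha-1)/(2n)$ correction is right. Its only soft spot is that the $n^{-2}$ and $n^{-3}$ coefficients, $\tfrac{1}{24}\alpha(\alpha-1)(\alpha-2)(3\alpha-1)$ and $\tfrac{1}{48}\alpha^2(\alpha-1)^2(\alpha-2)(\alpha-3)$, are asserted to follow from bookkeeping rather than derived; this is a fair deferral in a sketch (they do come out of carrying Stirling through the $1/\bigl(360(n+s)^3\bigr)$ term and exponentiating), but as written it verifies the shape of the expansion, not those two polynomials. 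Your second route, the Hankel contour with $z=1+t/n$ and the Hankel representation of $1/\Gamma(\alpha)$, is exactly the method of the cited source, and you correctly identify why Flajolet and Sedgewick prefer it: it yields uniform error control and extends to logarithmic factors where no Gamma-ratio formula exists. You also correctly explain the role of the hypothesis $\alpha\in\mathbb{C}\setminus\mathbb{Z}_{\leq 0}$ (finiteness and nonvanishing of $1/\Gamma(\alpha)$, degeneration to a polynomial otherwise). In short: correct, by the standard argument plus the source's own argument, with one computational step delegated to symbolic verification.
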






Now we are ready to prove the following approximation.
\begin{prop}\label{prop:equiv}
The exact order of functions $a_i$ for $z \to 1/4$ is given by
  \begin{displaymath}
  a_i(z) \quad \subrel{z \to \frac{1}{4}}{\sim} \quad \frac{C_{i-2}}{2^{3i-5}(1-4z)^{(2i-3)/2}},
\end{displaymath}
where $C_i$ is the $i^{th}$ Catalan number.
\end{prop}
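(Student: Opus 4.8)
The plan is to argue by strong induction on $i$, reading off the dominant singular term of the solved recurrence~$(\ddagger)$ as $z \to 1/4$. The base cases are supplied by the explicit formulas of Corollary~\ref{functions}: the Catalan function $a_1 = \frac{1-\sqrt{1-4z}}{2z}$ is analytic at $z=1/4$ with $a_1(1/4)=2$, its only non-analytic part being $-2(1-4z)^{1/2}$; and $a_2 = -\tfrac12 + \tfrac12(1-4z)^{-1/2}$, so $a_2 \sim \tfrac12 (1-4z)^{-1/2}$, which is exactly the asserted equivalent for $i=2$ since $C_0 = 1$ and $2^{3\cdot 2 - 5} = 2$.

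For the inductive step I would first record how differentiation acts on these singular orders. Each $a_j$ is, by $(\ddagger)$, a rational expression in $z$, in the functions $a_k$ with $k<j$, and in their derivatives; since $a_1$ is algebraic, a parallel induction shows every $a_j$ is an algebraic function of $z$ whose sole dominant singularity is the branch point at $z = 1/4$ and which is $\Delta$-analytic there. Consequently each $a_j$ possesses a singular expansion in powers of $(1-4z)^{1/2}$ that may be differentiated term by term, so that $g(z) \sim A(1-4z)^{-\beta}$ entails that $g^{(m)}(z)$ has order $(1-4z)^{-\beta - m}$. Applying this: $a_1^{(m)}$ has order $(1-4z)^{1/2-m}$ for $m \geq 1$, $a_2^{(m)}$ has order $(1-4z)^{-1/2-m}$, and, by the induction hypothesis, $a_j^{(m)}$ has order $(1-4z)^{-(2j-3)/2 - m}$ for $2 \le j \le i-1$. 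Justifying this differentiation rule is the one delicate point, since asymptotic equivalence is not in general preserved under differentiation; it is precisely here that the algebraicity and $\Delta$-analyticity of the $a_i$ are needed.

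The heart of the proof is then a bookkeeping of orders in $(\ddagger)$. The three single terms $z^{i-1}a_1^{(i-2)}/(i-2)!$, $z^{i-2}a_1^{(i-3)}/(i-3)!$, $z^{i-2}a_2^{(i-3)}/(i-3)!$ and every summand of the double sum are all $O\big((1-4z)^{-(2i-5)/2}\big)$ inside the bracket, whereas each product $a_j a_{i-j+1}$ with $2 \le j \le i-1$ has order exactly $(1-4z)^{-(i-2)} = (1-4z)^{-(2i-4)/2}$; since $(2i-5)/2 < (2i-4)/2$, the product sum strictly dominates every other term. After dividing the bracket by $\sqrt{1-4z}$ the dominant order becomes $(1-4z)^{-(2i-3)/2}$, exactly as claimed, while the remaining terms contribute only order $(1-4z)^{-(i-2)}$, which is $o\big((1-4z)^{-(2i-3)/2}\big)$ because their ratio is $(1-4z)^{1/2} \to 0$.

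It remains to pin down the constant. Substituting the induction hypothesis gives $a_j a_{i-j+1} \sim \frac{C_{j-2}C_{i-j-1}}{2^{3i-7}}(1-4z)^{-(i-2)}$, and summing over $2 \le j \le i-1$, the Catalan convolution identity $\sum_{p=0}^{n} C_p C_{n-p} = C_{n+1}$ with $n = i-3$ yields $\sum_{j=2}^{i-1} C_{j-2}C_{i-j-1} = C_{i-2}$. Multiplying by $z \to 1/4 = 2^{-2}$ and dividing by $\sqrt{1-4z}$ then gives $a_i \sim \tfrac14\cdot \frac{C_{i-2}}{2^{3i-7}}(1-4z)^{-(i-2)-1/2} = \frac{C_{i-2}}{2^{3i-5}}(1-4z)^{-(2i-3)/2}$, the asserted equivalent. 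No cancellation can occur in this last step because every Catalan number is positive, so the leading coefficient is genuinely nonzero.
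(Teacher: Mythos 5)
Your proof is correct and follows essentially the same route as the paper's: induction on $i$ via the solved recurrence $(\ddagger)$, bookkeeping showing the derivative terms are of order at most $(1-4z)^{-(2i-5)/2}$ and hence dominated by the products $a_j a_{i-j+1}$, and the Catalan convolution $\sum_{p=0}^{i-3} C_p C_{i-3-p} = C_{i-2}$ (which the paper packages as the recurrence $K_i = z\sum_{j=2}^{i-1} K_j K_{i-j+1}$ at $z=\frac{1}{4}$) to pin down the constant $C_{i-2}/2^{3i-5}$. Your one genuine addition is the explicit justification, via algebraicity and term-by-term differentiable Puiseux expansions at $z=\frac{1}{4}$, of the step ``differentiation shifts the singular exponent by one,'' which the paper asserts without comment---a welcome tightening rather than a different proof.
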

\begin{proof}
We prove the result by induction using Theorem~\ref{functionsRel}. For the sake of simplicity, we write $\sim$ and ``is of order'' to denote $\subrel{z \to \frac{1}{4}}{\sim}$ and ``is of order for $z \to 1/4$''.

The result is true for $i=1$. For $i>1$ and $j\le i$,  assume that $a_j(z)$ is of order $\frac{1}{(1-4z)^{(2j-3)/2}}$ and look at
  equation $(\ddagger)$ to prove that  $a_{i+1}(z)$ is of order $\frac{1}{(1-4z)^{(2i-1)/2}}$.

  Notice that the $i^{th}$ derivative of $a_1$ is of order
  $\frac{1}{(1-4z)^{(2i-1)/2}}$, hence its $(i-2)^{th}$ derivative is of order
  $\frac{1}{(1-4z)^{(2i-5)/2}}$ and its $(i-3)^{th}$ derivative is of order
  $\frac{1}{(1-4z)^{(2i-7)/2}}$.  Similarly, the $i^{th}$ derivative of $a_2$ is of
  order $\frac{1}{(1-4z)^{(2i+1)/2}}$, hence its $(i-3)^{th}$ derivative is of order
  $\frac{1}{(1-4z)^{(2i-5)/2}}$.

  By induction for $j+2\le i-3$,  $a_{j+2}$ is of order $\frac{1}{(1-4z)^{(2j+1)/2}}$.
  Among its successive derivatives we derive at most $i-3-j$ times, hence the
  items in the sum are of order at most $\frac{1}{(1-4z)^{(2i-5)/2}}$.

  Now, every product $a_j a_{i-j+1}$ is of order $\frac{1}{(1-4z)^{i-2}}$, therefore the first four terms in $(\ddagger)$ do not
  contribute to the asymptotic value of $a_{i+1}(z)$.  Hence the contribution to the asymptotic value is given only by products $a_j a_{i-j+1}$'s. Multiplying their order by $\frac{1}{\sqrt{1-4z}}$ we obtain that the last sum is of order $\frac{1}{(1-4z)^{(2i-1)/2}}$.

    Let us denote by $K_i$ the multiplicative coefficient $C_{i-2}/2^{3i-5}$ of $\frac{1}{(1-4z)^{(2j-3)/2}}$.
    One notices that $K_2= \frac{1}{2} = \frac{C_0}{2^{3\times 2-5}}$.  The sum
    $z~\sum_{j=2}^{i-1} a_j a_{i-j+1}$ shows the inductive part.  Indeed, when
    $z=\frac{1}{4}$:
\begin{eqnarray*}
  z~\sum_{j=2}^{i-1} K_j K_{i-j+1} &=&
  \frac{1}{4} ~\sum_{j=2}^{i-1} \frac{C_{j-2}}{2^{3j-5}}\ \frac{C_{i-j+1-2}}{2^{3(i-j+1)-5}} \\
&=& \frac{1}{2^{3i-5}} \ \sum_{j=0}^{i-3} C_{j} C_{i-j-3}\\
&=& \frac{C_{i-2} }{2^{3i-5}} \ = \ K_i.
\end{eqnarray*}
\end{proof}

Finally, we are able to provide asymptotic values of coefficients of functions $a_i$.

\begin{theo}
The coefficient of $z^n$ in the function $a_k(z)$ admits the following asymptotic expansion:
    \begin{eqnarray*}
      [z^n] a_k(z) &= &\frac{1}{2^{k-1} (k-1)! \sqrt{`p}} \ 4^nn^{(2k-5)/2} \ \cdot \ `J(n,k)
  \end{eqnarray*}
where
\begin{eqnarray*}
  `J(n,k)&=& \quad 1 + \frac{(2k-3)(2k-5)}{8n} +
  \frac{(2k-3)(2k-5)(2k-7)(3k-11)}{384n^2} + \\
&& \quad \frac{(2k-3)^2(2k-5)^2(2k-7)(2k-9)}{3672n^3}
  + O\Big(\frac{1}{n^4}\Big).
\end{eqnarray*}
\end{theo}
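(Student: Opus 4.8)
The plan is to obtain the expansion by reading off the behaviour of $a_k(z)$ at its dominant singularity $z=1/4$ and then transferring it to the coefficients by singularity analysis. Proposition~\ref{prop:equiv} already supplies the governing singular term: as $z\to 1/4$,
\[ a_k(z)\ \sim\ \frac{C_{k-2}}{2^{3k-5}}\,(1-4z)^{-(2k-3)/2}, \]
so the whole statement should follow by feeding this principal term into the transfer theorem of Fact~\ref{fact:asym_exp}. Since every $a_k$ is an explicit algebraic function of $z$ and $\sqrt{1-4z}$ (computed inductively from the recurrence $(\ddagger)$ and displayed for $k\le 7$ in Corollary~\ref{functions}), its only singularity on the circle of convergence is at $z=1/4$ and it is of the algebraic type covered by Fact~\ref{fact:asym_exp}; this is what legitimises a term-by-term transfer.

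First I would absorb the factor $4$ in the singularity: setting $\alpha=(2k-3)/2$ and using $[z^n](1-4z)^{-\alpha}=4^n[z^n](1-z)^{-\alpha}$ produces the expected $4^n$. Applying Fact~\ref{fact:asym_exp} with exponent $\alpha$ then gives
\[ [z^n]a_k(z)\ \sim\ \frac{C_{k-2}}{2^{3k-5}}\,\frac{4^n\,n^{\alpha-1}}{\Gamma(\alpha)}\Big(1+\tfrac{\alpha(\alpha-1)}{2n}+\cdots\Big), \]
and since $\alpha-1=(2k-5)/2$ this already reproduces the powers $4^n n^{(2k-5)/2}$ of the statement.

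The heart of the computation is the simplification of the constant $\frac{C_{k-2}}{2^{3k-5}\Gamma(\alpha)}$. Here I would substitute the closed forms $C_{k-2}=\frac{(2k-4)!}{(k-1)!\,(k-2)!}$ and, by the duplication formula, $\Gamma\!\big(k-\tfrac32\big)=\frac{(2k-4)!}{4^{k-2}(k-2)!}\sqrt{\pi}$; cancelling the common factor $(2k-4)!/(k-2)!$ and collecting powers of $2$ collapses the expression to $\frac{1}{2^{k-1}(k-1)!\sqrt{\pi}}$, exactly the claimed prefactor. The correction factor $J(n,k)$ is then purely mechanical: substitute $\alpha=(2k-3)/2$, $\alpha-1=(2k-5)/2$, $\alpha-2=(2k-7)/2$, $\alpha-3=(2k-9)/2$ into the universal coefficients $\frac{\alpha(\alpha-1)}{2}$, $\frac{\alpha(\alpha-1)(\alpha-2)(3\alpha-1)}{24}$ and $\frac{\alpha^2(\alpha-1)^2(\alpha-2)(\alpha-3)}{48}$ of Fact~\ref{fact:asym_exp}, and clear the resulting half-integer denominators.

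I expect the main obstacle to be bookkeeping rather than anything conceptual: keeping the Gamma/Catalan reduction of the leading constant exact, and propagating $\alpha=(2k-3)/2$ through the three correction coefficients so that the half-integers collapse cleanly to the polynomial-in-$n$ factors displayed in $J(n,k)$. The one structural point that deserves an explicit sentence is the justification that the leading order is indeed governed by the single principal term transferred above, which is precisely the content of Proposition~\ref{prop:equiv}; the explicit algebraic shape of the $a_k$ from $(\ddagger)$ guarantees that singularity analysis applies and that no competing singularity interferes with this dominant contribution.
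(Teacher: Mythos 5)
Your proposal follows the paper's proof essentially step for step: transfer the principal singular term supplied by Proposition~\ref{prop:equiv} through Fact~\ref{fact:asym_exp} after pulling out the factor $4^n$, reduce the leading constant via the half-integer Gamma value $\Gamma\big(\tfrac{2k-3}{2}\big)=\Gamma\big((k-2)+\tfrac{1}{2}\big)=\frac{(2(k-2))!\,\sqrt{\pi}}{2^{2(k-2)}\,(k-2)!}$ combined with the closed form of $C_{k-2}$ (exactly the cancellation the paper performs), and obtain $J(n,k)$ by substituting $\alpha=\tfrac{2k-3}{2}$ into the universal correction coefficients of Fact~\ref{fact:asym_exp}. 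The only difference is cosmetic: you add an explicit sentence justifying that singularity analysis applies to the algebraic functions $a_k$ and that no competing singularity interferes, a point the paper leaves implicit.
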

  \begin{proof}
First recall that
\[`G((2k-3)/2) = `G\Big((k-2)+\frac{1}{2}\Big) = \frac{(2(k-2))!\sqrt{`p}}{2^{2(k-2)} (k-2)!}.\]
Now using Fact~\ref{fact:asym_exp}, we can compute the principal  part:
\begin{eqnarray*}
  [z^n]a_k(z) &=& \frac{C_{k-2}}{2^{3k-5}}\ 4^n \ [z^n](1-z)^{(2k-3)/2}\\
  &\sim&\frac{C_{k-2}}{2^{3k-5}}\ 4^n \ \frac{n^{(2k-5)/2}}{`G((2k-3)/2)}\\
  &=& \frac{C_{k-2}}{2^{3k-5}} \frac{(k-2)! 2^{2(k-2)}}{(2(k-2))! \sqrt{`p}}\
  4^nn^{(2k-5)/2}\\
&=& \frac{C_{k-2} (k-2)!}{2^{k-1} (2(k-2))! \sqrt{`p}} \ 4^nn^{(2k-5)/2}\\
&=& \frac{1}{2^{k-1} (k-1)! \sqrt{`p}} \ 4^nn^{(2k-5)/2}.
\end{eqnarray*}
For $`J(n,k)$ we use Fact~\ref{fact:asym_exp} with $`a = \frac{2k-3}{2}$.
  \end{proof}


By looking at Figure \ref{fig:gen}, we can easily notice a recurring pattern concerning the structure of functions $a_i$. Therefore, we state the following proposition.

\begin{prop}\label{prop:Q_R}
For every $i > 2$ we have
\[a_i(z) = z^{i-2} \left( \frac{Q_i(z)}{(1-4z)^{i-2}} + \frac{S_i(z)}{(1-4z)^{i-\frac{3}{2}}} \right), \]
where $Q_i$ and $S_i$ are polynomials over ${\mathbb Z}$ in $z$ and $\deg Q_i = \left\lfloor \frac{i-3}{2} \right\rfloor$ and $\deg S_i = \left\lfloor \frac{i-1}{2} \right\rfloor$.
\end{prop}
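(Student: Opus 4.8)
The plan is to argue by induction on $i$, feeding the inductive shapes of $a_1,a_2,a_{j+2}$ (and their derivatives) and of $a_j,a_{i-j+1}$ into the recurrence $(\ddagger)$; the base cases $i=3,\dots,7$ are read off Figure~\ref{fig:gen}. It is convenient to work in the normal form of the field $\mathbb{Q}(z)[\sqrt{1-4z}]$, writing $a_i=u_i(z)+v_i(z)\sqrt{1-4z}$ with $u_i,v_i\in\mathbb{Q}(z)$, so that the assertion becomes the pair of statements $u_i=\dfrac{z^{i-2}Q_i}{(1-4z)^{i-2}}$ and $v_i=\dfrac{z^{i-2}S_i}{(1-4z)^{i-1}}$, with $Q_i,S_i\in\mathbb{Z}[z]$ of degrees $\lfloor(i-3)/2\rfloor$ and $\lfloor(i-1)/2\rfloor$. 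The three operations occurring in $(\ddagger)$ act predictably on this normal form: multiplication sends $\bigl((u_1,v_1),(u_2,v_2)\bigr)$ to $\bigl(u_1u_2+(1-4z)v_1v_2,\ u_1v_2+u_2v_1\bigr)$; division by $\sqrt{1-4z}$ swaps components, sending $(u,v)$ to $\bigl(v,\,u/(1-4z)\bigr)$; and differentiation sends $(u,v)$ to $\bigl(u',\,v'-2v/(1-4z)\bigr)$. Here the key local fact is that differentiating a term $R(z)/(1-4z)^{p}$ produces $\bigl(R'(1-4z)+4pR\bigr)/(1-4z)^{p+1}$, so each derivative raises the pole order at $z=1/4$ by one while leaving the numerator degree unchanged.

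First I would settle the pole-order bookkeeping at $z=1/4$, which is the clean part. Substituting the inductive forms into $(\ddagger)$, one checks that every group of terms stays within pole order $i-2$ in its rational component and $i-\tfrac32$ in its $\sqrt{1-4z}$-component. In fact the dominant singularity $(1-4z)^{-(i-3/2)}$ is produced only by the product sum $z\sum_j a_j a_{i-j+1}$, through the cross term $S_j S_{i-j+1}$; the three derivative terms and the double sum contribute only poles of order at most $i-2$ to the rational component and strictly smaller ones to the $\sqrt{1-4z}$-component. This reuses exactly the order computation already carried out in the proof of Proposition~\ref{prop:equiv}.

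The two genuine difficulties lie elsewhere. The first is the emergence of the global factor $z^{i-2}$: individually the terms of $(\ddagger)$ do not carry it. For instance the purely rational part $1/(2z)$ of $a_1$, after $i-2$ differentiations, multiplication by $z^{i-1}$ and division by $\sqrt{1-4z}$, contributes a spurious summand of the form $\mathrm{const}\cdot(1-4z)^{-1/2}$ to $v_i$, with no factor of $z^{i-2}$ and a pole order far below $i-1$; the double-sum terms likewise have valuation only $j+1$ at $z=0$. Such contributions must cancel across the groups of $(\ddagger)$, and this cancellation is precisely what supplies the prefactor $z^{i-2}$ and kills the low-order poles. I would secure it by tracking, in parallel with the singularity at $1/4$, the valuation of $a_i$ at $z=0$: combinatorially $[z^n]a_i=c_{n,n+2-i}=0$ for $n<i-2$, so $a_i=O(z^{i-2})$, and carrying the conjugate series $\overline{a_i}=u_i-v_i\sqrt{1-4z}$ through the conjugated recurrence shows $\overline{a_i}=O(z^{i-2})$ as well for $i\ge 3$. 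Since $\sqrt{1-4z}$ is a unit at $0$, these two facts force $u_i=O(z^{i-2})$ and $v_i=O(z^{i-2})$ separately, i.e. their numerators are divisible by $z^{i-2}$ and the spurious poles cancel.

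The second and main obstacle is pinning the numerator degrees at infinity exactly. Writing $\ell=i-j+1$, the product $a_j a_\ell$ contributes $Q_j S_\ell+S_j Q_\ell$ to the rational numerator and $S_j S_\ell+(1-4z)Q_j Q_\ell$ to the $\sqrt{1-4z}$ numerator; using $\lfloor a/2\rfloor+\lfloor b/2\rfloor\le\lfloor(a+b)/2\rfloor$ gives the upper bounds $\deg Q_i\le\lfloor(i-3)/2\rfloor$ and $\deg S_i\le\lfloor(i-1)/2\rfloor$. The delicate point is that, depending on the parity of $i$, the double-sum term (through $a_{i-3}$) can reach the same top degree, so a careful parity case analysis is needed to see which terms actually attain the bound. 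The exact equalities then follow from a non-cancellation argument on the leading coefficients: maintaining inductively that the top coefficients of $Q_j$ and $S_j$ are positive (as observed for $a_3$–$a_7$), the top coefficient of $S_i$ is a positive convolution $\sum_j C_{j-2}C_{i-j-1}=C_{i-2}\neq 0$ of those of the $S_j$ — the very Catalan convolution appearing in Proposition~\ref{prop:equiv} — and the analogous sign-definite sum controls $Q_i$, so no cancellation occurs and the degrees are attained. Combining the pole bound at $1/4$, the valuation bound at $0$, and these degree bounds at infinity then determines $u_i$ and $v_i$ as exactly the claimed rational functions.
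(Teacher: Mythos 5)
Your skeleton coincides with the paper's own route --- the published proof is only a sketch (``induction using $(\ddagger)$, in the same vein as Proposition~\ref{prop:equiv}'', with Lang's polynomials covering the derivative terms) --- and your setup is a sound formalization of it: the normal form in $\mathbb{Q}(z)[\sqrt{1-4z}]$, the componentwise rules for multiplication, division by $\sqrt{1-4z}$ and differentiation, and the pole-order bookkeeping at $z=1/4$ are all correct. But the two ``genuine difficulties'' you rightly isolate are resolved incorrectly. The non-cancellation argument for the exact degrees is false: the Catalan convolution of Proposition~\ref{prop:equiv} governs the \emph{values} $S_j(1/4)$ (the paper records $S_{n+2}(1/4)=C_n/2^{n+1}$ immediately after the proposition), not the leading coefficients of the $S_j$. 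Figure~\ref{fig:gen} gives top coefficients $1,1,1,3,2$ for $S_3,\dots,S_7$ --- not Catalan numbers --- and your own parity analysis restricts which pairs $(j,\ell)$ can reach top degree, which already contradicts a full convolution $\sum_j C_{j-2}C_{i-j-1}$. Concretely, at $i=6$ the product sum contributes $2\bigl(S_3S_4+(1-4z)Q_3Q_4\bigr)+\bigl(S_5-(1-4z)Q_5\bigr)=19z^2-11z+2$ to $S_6$ (note the pairs involving $a_2$ need separate treatment, since $a_2$ is not of the inductive shape), whereas $S_6=3z^2-203z+51$; hence the derivative terms must contribute $-16z^2-192z+49$, i.e.\ they \emph{do} attain the top degree, with negative leading sign. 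Sign-definiteness fails, the true leading coefficient $3$ is the residue of a cancellation, and positivity of top coefficients cannot be propagated; so your argument does not establish $\deg S_i=\left\lfloor\frac{i-1}{2}\right\rfloor$. (What the evaluation at $z=1/4$ does give is $S_i(1/4)\neq 0$, which pins the singular exponent $i-\frac{3}{2}$ exactly --- the pole order, not the degree.)

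The second gap is the conjugate step. You obtain $a_i=O(z^{i-2})$ correctly from $[z^n]a_i=c_{n,n+2-i}=0$ for $n<i-2$, but the companion bound $\overline{a_i}=O(z^{i-2})$ is asserted rather than proved. The conjugate has no combinatorial interpretation, and in the conjugated recurrence the terms formed from $\overline{a_1}=\bigl(1+\sqrt{1-4z}\bigr)/(2z)$ are individually only $O(1)$: for instance $z^{i-1}\overline{a_1}^{(i-2)}/(i-2)!$ has a nonzero constant term. So the vanishing of the first $i-2$ coefficients of $\overline{a_i}$ is exactly the cancellation whose existence you are trying to establish, and ``carrying the conjugated recurrence'' is circular as written. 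The device itself is apt --- coefficientwise cancellation between $u_i$ and $v_i\sqrt{1-4z}$ genuinely can occur, as $a_1$ itself shows, so \emph{some} such argument is needed to extract the $z^{i-2}$ prefactor --- but it must be demonstrated, e.g.\ by an induction tracking the low-order coefficients of the conjugated terms. In sum: right framework, matching the paper's sketched induction on $(\ddagger)$, but the two steps carrying the actual content --- the exact degrees and the $z^{i-2}$ factor --- rest on a false leading-coefficient identity and an unproven assertion, respectively.
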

\begin{proof}
  By induction using  formula $(\ddagger)$, in the same vein as the proof of
  Proposition~\ref{prop:equiv}.  In particular, the two first members of $(\ddagger)$ are
  derivatives of the generating function of Catalan numbers studied in \cite{lang02:_polyn_cataly}.
\end{proof}

  As we have already mentioned, the number of closed terms of size $n$ is given by
  $P_n(0)$, which corresponds to the $n^{th}$ term of the Taylor expansion of the function
  $a_{n+2}$. Hence, the sequence of the numbers of closed $`l$-terms is equal to the
  sequence $\left( [z^n]a_{n+2}(z) \right)_{n \geq 0}$. From Proposition~\ref{prop:Q_R}, the
  number of closed terms of size $n$ is equal to $Q_{n+2}(0) + S_{n+2}(0)$.  Currently, we
  have no recursive formula for the $Q_n$'s and the $S_n$'s.  However, by Proposition~\ref{prop:equiv}, we know that
\[S_{n+2}\left(\frac{1}{4}\right) = \frac{C_{n}}{2^{n+1}}.\]

\section{Counting normal forms}
\label{sec:count-norm-forms}

Beside counting terms, it is also interesting to count normal forms.  To this end, we
describe the set of normal forms as follows
\begin{eqnarray*}
  \G_{n+1,m} &=&  [\Var{1} .. \Var{m}] \ \uplus\ \biguplus_{i=0}^n \G_{i,m} @ \F_{n-i,m}\\
  \F_{n+1,m} &=& `l\,\F_{n,m+1} \uplus \G_{n,m}
\end{eqnarray*}
Recall that a normal form consists of a (possibly empty) sequence of abstractions
followed by the application of a de Bruijn index to normal forms.  $\F_{n,m}$
represents the normal forms of size $n$ with at most $m$ free indices and $\G_{n,m}$
represents the neutral terms, i.e., terms starting with an index, of size $n$ with at
most $m$ free indices.  From this we derive the formulas for counting:
\begin{eqnarray*}
  G_{0,m} &=& m\\
  G_{n+1,m} &=& \sum_{k=0}^{n}G_{n-k,m} F_{k,m},\\\\
  F_{0,m} &=& m\\
  F_{n+1,m}&=& F_{n,m+1} + G_{n+1,m}.
\end{eqnarray*}
The values of $F_{n,0}$ up to $n=10$ are:
\[0, 1, 3, 11, 53, 323, 2359, 19877, 188591, 1981963, 22795849.\] This sequence, added by us to the \emph{On-line Encyclopedia of Integer Sequences}, has its entry number \textbf{A224345}.  A
\textsf{Haskell} program for computing the values of $F_{n,m}$ and
$G_{n,m}$ is given in Figure~\ref{fig:nf-program}.
\begin{figure}[!htb]
  \centering
\hrule

\medskip

\begin{verbatim}
ftab :: [[Integer]]
ftab = [0..] : [[f' (n-1) (m+1) + g' n m | m<-[0..]] | n<-[1..]]

gtab :: [[Integer]]
gtab = [0..] : [[s n m |  m <- [0..]] | n <- [1..]]
  where s n m  = let fi =  [f' i m | i <- [0..(n-1)]]
                     gi =  [g' i m | i <- [n-1,n-2..0]]
                 in sum $ zipWith (*) fi gi
 
f' :: Int -> Int -> Integer
f' n m = ftab !! n !! m
g' n m = gtab !! n !! m
\end{verbatim}
 \hrule
  \caption{\textsf{Haskell} program for counting normal forms}
\label{fig:nf-program}
\end{figure}

The efficiency of this program can be improved (Figure~\ref{fig:nf-imp-program}).
\begin{figure}[!htb]
  \centering
\hrule

\medskip

\begin{verbatim}
fgtab :: [[[(Integer,Integer)]]]
fgtab = iterate nextn . map return $ zip [0..] [0..]
  where
  nextn ls = zipWith rake (tail ls) ls
  rake ((f1,_):_) ms = let cv = conv ms in (f1 + cv, cv) : ms
  conv ms = sum $ zipWith (\(a,_) -> \(_,b) -> a*b) ms (reverse ms)

f :: Int -> Int -> Integer
f n m = fst $ head $ fgtab !! n !! m
g n m = snd $ head $ fgtab !! n !! m
\end{verbatim}
\hrule
  \caption{\textsf{Haskell} improved program for counting normal forms}
  \label{fig:nf-imp-program}
\end{figure}
 Like for terms we derive polynomials:
\begin{eqnarray*}
  \PNF_0(m) &=& m \\
  \PNF_{n+1}(m) &=& \PNF_n(m+1) + \QNF_{n+1}(m),\\\\
  \QNF_0(m) &=& m \\
  \QNF_{n+1}(m) &=& \sum_{k=0}^{n} \PNF_k(m) \QNF_{n-k}(m).
\end{eqnarray*}
\begin{lemma}
For every $n$, the degree of the polynomials $\PNF_n$ and $\QNF$ is equal to $n+1$.
\end{lemma}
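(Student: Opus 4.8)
The plan is to proceed by simultaneous induction on $n$, in the same spirit as the proof of Lemma~\ref{lem:degree}, but strengthening the induction hypothesis so as to keep track of the \emph{leading coefficients}. The subtlety that forces this strengthening is that $\QNF_{n+1}$ is defined as a \emph{sum} of products $\PNF_k(m)\,\QNF_{n-k}(m)$, each of which has degree $n+2$; a priori the leading terms of these summands could cancel, which would drop the degree below $n+2$. To rule this out I would carry along the extra invariant that the leading coefficients of $\PNF_n$ and $\QNF_n$ are \emph{strictly positive}.

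Concretely, let me write $p_n$ and $q_n$ for the coefficients of $m^{n+1}$ in $\PNF_n$ and $\QNF_n$ respectively, and prove by induction on $n$ the combined claim: $\deg \PNF_n = \deg \QNF_n = n+1$, with $p_n > 0$ and $q_n > 0$. The base case $n=0$ is immediate, since $\PNF_0(m) = \QNF_0(m) = m$ gives $p_0 = q_0 = 1 > 0$.

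For the inductive step I would first handle $\QNF_{n+1}$. By the induction hypothesis each factor $\PNF_k$ has degree $k+1$ and each $\QNF_{n-k}$ has degree $n-k+1$, so every product $\PNF_k(m)\,\QNF_{n-k}(m)$ has degree exactly $(k+1)+(n-k+1) = n+2$ with leading coefficient $p_k\,q_{n-k} > 0$. Because all of these leading coefficients are positive, no cancellation can occur in the sum, and therefore $\deg \QNF_{n+1} = n+2$ with $q_{n+1} = \sum_{k=0}^{n} p_k\,q_{n-k} > 0$. Turning to $\PNF_{n+1}(m) = \PNF_n(m+1) + \QNF_{n+1}(m)$, the substitution $m \mapsto m+1$ does not change degree, so $\PNF_n(m+1)$ still has degree $n+1$, which is strictly smaller than $\deg \QNF_{n+1} = n+2$. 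Hence the second summand dominates, giving $\deg \PNF_{n+1} = n+2$ and $p_{n+1} = q_{n+1} > 0$, which closes the induction.

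The only genuine difficulty, and the step I expect to be the main (if modest) obstacle, is exactly this possibility of leading-term cancellation in the defining sum for $\QNF_{n+1}$; strengthening the hypothesis to positivity of the leading coefficients neutralizes it cleanly, everything else being a routine degree count. As a sanity remark, the recurrence $q_{n+1} = \sum_{k=0}^{n} p_k\,q_{n-k}$ together with the equality $p_n = q_n$ (which falls out of the argument, since $p_{n+1} = q_{n+1}$ and $p_0 = q_0$) is precisely the Catalan recurrence, so the common leading coefficient of $\PNF_n$ and $\QNF_n$ is the $n^{th}$ Catalan number, mirroring the situation for the polynomials $P_n$.
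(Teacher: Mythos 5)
Your proof is correct and takes essentially the same approach as the paper, whose proof is just the one-line assertion that the result follows by induction on $n$ from the defining recurrences (as in its proof of Lemma~\ref{lem:degree}). Your strengthened invariant---strict positivity of the leading coefficients, which rules out cancellation in the sum $\sum_{k=0}^{n}\PNF_k(m)\,\QNF_{n-k}(m)$ defining $\QNF_{n+1}$---is precisely the detail the paper leaves implicit, and your closing observation that the common leading coefficient is a Catalan number mirrors the paper's remark about the leading coefficients of the $P_n$'s.
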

\begin{proof}
  Like the proof of Lemma~\ref{lem:degree},  by induction on $n$ from the definition of
  $\PNF_n$ and $\QNF$.
\end{proof}

\subsection{Coefficients of the polynomials $\PNF_n$ and $\QNF_n$}

Let us count $i$-nf-contexts.  They are closed normal forms with $i$ holes.  The
$i$-nf-contexts of size $n$ are counted by $d_{n,i}$. They are abstractions of
$i$-contexts of the form $[\ ]\,N_1\ldots N_p$, which we call $i$-nf-pre-contexts,
where each $N_j$ is a $i_j$-nf-context (with $i_1+\ldots+i_j+\ldots+i_p = i-1$) and
which are counted by $g_{n,i}$.  There is one $1$-nf-context and one
$1$-nf-pre-context of size $0$, whereas there are $0$ $i$-nf-contexts and $0$
$i$-nf-pre-contexts for $i\neq 1$ of size $0$. Thus we get
\begin{eqnarray*}
  d_{0,i} &=& [i = 1],\\
  g_{0,i} &=& [i = 1].
\end{eqnarray*}

By reasoning similarly as in Section~\ref{sec:an-comb-interpr} and by using the
description of normal forms given above, we get:
\begin{eqnarray*}
  d_{n+1,i} &=&  \sum_{j=i}^{n+1} {j \choose i}\,d_{n,j} + g_{n+1,i}, \\
  g_{n+1,i} &=& \sum_{j=0}^i\sum_{k=0}^n g_{k,j} d_{n-k,i-j}.
\end{eqnarray*}

Therefore 
\begin{eqnarray*}
\PNF_n(m)  &=& \sum_{i=0}^n d_{n,i} m^i,\\
\QNF_n(m)  &=&  \sum_{i=0}^n g_{n,i} m^i.
\end{eqnarray*}

\subsection{Generating functions}
\label{sec:D_and_G}
Consider the two generating functions:
\begin{eqnarray*}
  D(z,u) &=& \sum_{n,i\ge 0}d_{n,i}z^n u^i,\\
  G(z,u) &=& \sum_{n,i\ge 0}g_{n,i}z^n u^i.
\end{eqnarray*}
Then we have
\begin{eqnarray*}
  D(z,u) &=& \sum_{n=0}^{\infty}\PNF_n(u) z^n,\\
  G(z,u) &=& \sum_{n=0}^{\infty}\QNF_n(u) z^n.
\end{eqnarray*}
Therefore
\begin{eqnarray*}
  D(z,u) &=& u + z \sum_{n=0}^{\infty} \PNF_n(u+1) z^n +  \sum_{n=1}^{\infty}
  \QNF_n(u) z^n\\
  &=& z D(z,u+1) + G(z,u)
\end{eqnarray*}
and
\begin{eqnarray*}
  G(z,u) &=& u + z \sum_{n=0}^{\infty} \QNF_n(u) \PNF_n(u) z^n\\
  &=& u + z \sum_{n=0}^{\infty} \sum_{k=0}^n g_{k,j} z^{k} u^i d_{n-k,i-j} z^{n-k}n u^{i-j}\\
  &=& u + z D(z,u)\, G(z,u).
\end{eqnarray*}
Consequently the two functions $D$ and $G$ satisfy 
\begin{eqnarray*}
    D(z,u) &=& z D(z,u+1) + G(z,u),\\
    G(z,u) &=& u + zD(z,u)\, G(z,u).
\end{eqnarray*}
$D(z,0)$ is the generating function for the numbers of closed normal forms of size
$n$.
By solving the above system of equations, we get:
\begin{displaymath}
  z D(z,u)² - (1+z^2 D(z,u+1)) D(z,u)  + u + z D(z,u+1) = 0,
\end{displaymath}
which yields
\begin{displaymath}
  D(z,u) = \frac{1 + z^2 D(z,u+1) - \sqrt{(1+z^2 D(z,u+1))^2 - 4z (u +z D(z,u+1))}}{2z}.
\end{displaymath}

\section{Lambda term generation}
\label{sec:lambda-term-gener}

From the simple equation defining the number $T_{n,m}$ of terms, we define the
function generating them.  More precisely, we define a function \textsf{unrankT~n~m~k}
which returns the $k^{th}$ term of size $n$ with at most $m$ distinct free variables
(see the \textsf{Haskell} program in Figure~\ref{fig:prog-gen}).  The variable $k$~is
an \textsf{Integer} (\ie an arbitrary-precision integer) which belongs to the interval $[1
.. T_{n,m}]$.  The unranking program mimics counting terms. If $n$ is $0$, then the
program returns the de Bruijn index $\Var{k}$.  Otherwise, if $k$ is less than
$T_{n-1,m+1}$, the rank $k$ lies in the part of the interval $[1 .. T_{n,m}]$ with
terms that are abstractions. Therefore, for $k\le T_{n-1,m+1}$ \textsf{unrankT~n~m~k}
returns $`l\,$\textsf{(unrankT~(n-1)~(m+1)~k)}. If the rank $k$ is larger than
$T_{n-1,m+1}$, it lies in the part of the interval $[1 .. T_{n,m}]$ with
applications.  Therefore we call a function \textsf{appTerm} which tries to identify which
sub-interval contains a pair of terms with indices $k'$ and $k''$ such that $k'+ k''$
is at the right place. The product of these values correspond to one of the products
$T_{j,m}T_{n-j,m}$ in the sum.  When the number~$j$ is found, two recursive calls of
\textsf{unrankT}, with appropriate $k'$ and $k''$, build the subterms of the application.  One
may notice $(h-1)$ and $+1$ which take into account the fact that $k$ lies in an
interval $[1..T_{\_,\_}]$ while \textsf{divMod} works in an interval
$[0..(T_{\_,\_}-1)]$.

The function \textsf{unrankT} relies on the function $t$ presented in
Section~\ref{sec:actual} and called here $O(n)$ times.  Assuming that $t$ has been
called once already and therefore runs in $O(n+m)$, \textsf{unrankT} performs $O(n)$
recursive calls and its complexity depends on one side linearly on the operations
\textsf{divMod}, $(-)$ and $(*)$ performed on arbitrary-precision integers and on the other side is in
$O(n^2)$ due to the accesses generated by $t$.

For a given $n$, this program can be used to enumerate all the closed $`l$-terms of size $n$ and, more
generally, all the $`l$-terms of size $n$ with at most $m$ distinct free variables.
This is appropriate only for small values of $n$, since the number of $`l$-terms gets superexponentially 
large with~$n$.  But overall, in order to generate a random term of size~$n$ with at most $m$
distinct free variables, it suffices to feed $T$ with a random value $k$ in the
interval $[1..T_{n,m}]$.  Similarly, on the basis of the recursive formula for the
number of normal forms, one defines a program for their generation (Figure~\ref{fig:nf}).

\begin{figure*}
\centering
\hrule

\medskip

\begin{verbatim}
data Term = Index Integer
          | Abs Term
          | App Term Term

unrankT :: Int -> Int -> Integer -> Term
unrankT 0 m k = Index k
unrankT n m k
    | k <= (t (n-1) (m+1)) = Abs (unrankT (n-1) (m+1) k)
    | (t (n-1) (m+1)) < k = appTerm (n-1) 0 (k - t (n-1) (m+1))
    where appTerm n j h
            | h <=  tjmtnjm  = let (dv,md) = ((h-1) `divMod` tnjm) 
                               in App (unrankT j m (dv+1)) 
                                      (unrankT (n-j) m (md+1))
            | otherwise = appTerm n (j + 1) (h -tjmtnjm) 
            where tnjm = t (n-j) m
                  tjmtnjm = (t j m) * tnjm 
\end{verbatim}
\hrule
\caption{\textsf{Haskell} program for term unranking}
\label{fig:prog-gen}
\end{figure*}

\begin{figure*}[!htb]
  \centering
\hrule
\medskip
\begin{verbatim}
unrankNF :: Int -> Int -> Integer -> Term
unrankNF 0 m k = Index k
unrankNF n m k
  | k <= f (n-1) (m+1) = Abs (unrankNF (n-1) (m+1) k)
  | f (n-1) (m+1) < k = unrankNG n m (k - f (n-1) (m+1))

unrankNG :: Int -> Int -> Integer -> Term
unrankNG 0 m k = Index k
unrankNG n m k = appNF (n-1) 0 m k
                                   
appNF :: Int -> Int -> Int -> Integer -> Term
appNF n j m h
    |  h <=  gjmfnjm  = let (dv,md) = (h-1) `divMod` fnjm
                        in App (unrankNG j m (dv+1)) 
                               (unrankNF (n-j) m (md +1))
    | otherwise = appNF n (j + 1) m (h -gjmfnjm)
    where fnjm = f (n-j) m                            
          gjmfnjm = g j m * fnjm
\end{verbatim}
\hrule
  \caption{\textsf{Haskell} program for normal form unranking}
  \label{fig:nf}
\end{figure*}

\section{Simply typable terms}
\label{sec:simply-typed-terms}

Once we have a random generator for untyped terms, it is easy to build a random generator
for simply typable terms. It suffices to sieve all terms by a predicate, which we call
\emph{isTypable}. This predicate is a classical principal type
algorithm
\cite{newman43:_strat,DBLP:conf/popl/DamasM82,DBLP:journals/logcom/Hindley08}.  In
Appendix~\ref{sec:typ_prog}, we give a \textsf{Haskell} program.
For instance,
applying the random generator with parameter~$10$ (for the size of the term), we got:
\begin{displaymath}
  `l (`l (((\Var{1}\  `l (\Var{1}))\  `l ((\Var{3}\  `l (((\Var{1}\  \Var{2})\  \Var{3}))))))) .
\end{displaymath}
This is a ``typical'' simply typable random closed $`l$-term of size $10$ written with de
Bruijn indices.  Its type is

\begin{center}
  \begin{math}
    ((`a "->" (((`b "->" `b) "->" (`a "->" `g) "->" `d) "->" `z)) "->" `z) "->" `g "->"
   \end{math}

    \begin{math}
    ((`b "->" `b) "->" (`a "->" `g) "->" `d) "->" `d .
  \end{math}
\end{center}

We were able to generate typable terms of size $50$.
For such terms, the generating process is slow, since it requires $50\,000$ generations of
terms, with (unsuccessful) tests of their typability before getting a typable one.  But for
size $40$, 
the number of attempts falls to $3$ for $10\,000$.

This kind of random generator is useful for testing functional programs. Micha{\l}
Pa{\l}ka \cite{palka12:_testin_compil,Palka:2011:TOC:1982595.1982615} proposed a tool
to debug Haskell compilers based on a $`l$-term generator.  His generator is
designed on the development of a typing tree, with choices made when a new rule is
created.  Such a method needs to cut branches in developing the tree to avoid loops.
This way his generator is not random, which may be a drawback in some cases.  As a
matter of fact, a method for generating simply typed terms based on developing a
typing tree does not produce terms on a uniform random distribution since it requires
to cut the tree at arbitrary locations to avoid loops, ``arbitrary'' in the sense of
randomness preservation.  In other words, there is no simple recursive definition of
simply typed terms, as well as of simply typable terms,  that would allow an easy uniform random generation.  This is also
what makes the combinatorial study of typed terms difficult.  A term is typable
because it satisfies some constraints, not because it is generated in a specific way.

\section{Experimental data}
\label{sec:exper-data}

Given a random term generator, we are able to write programs to make statistics on some
features of terms.  While there are many possible experiments of this type, here we
present only two that we find interesting and suggestive of other possibilities. 



\subsection{Average variable depth in closed terms and closed normal forms}
\label{sec:avearge-depth-var}

\begin{figure}[htb!]
  \centering
  \includegraphics[width=0.9\textwidth]{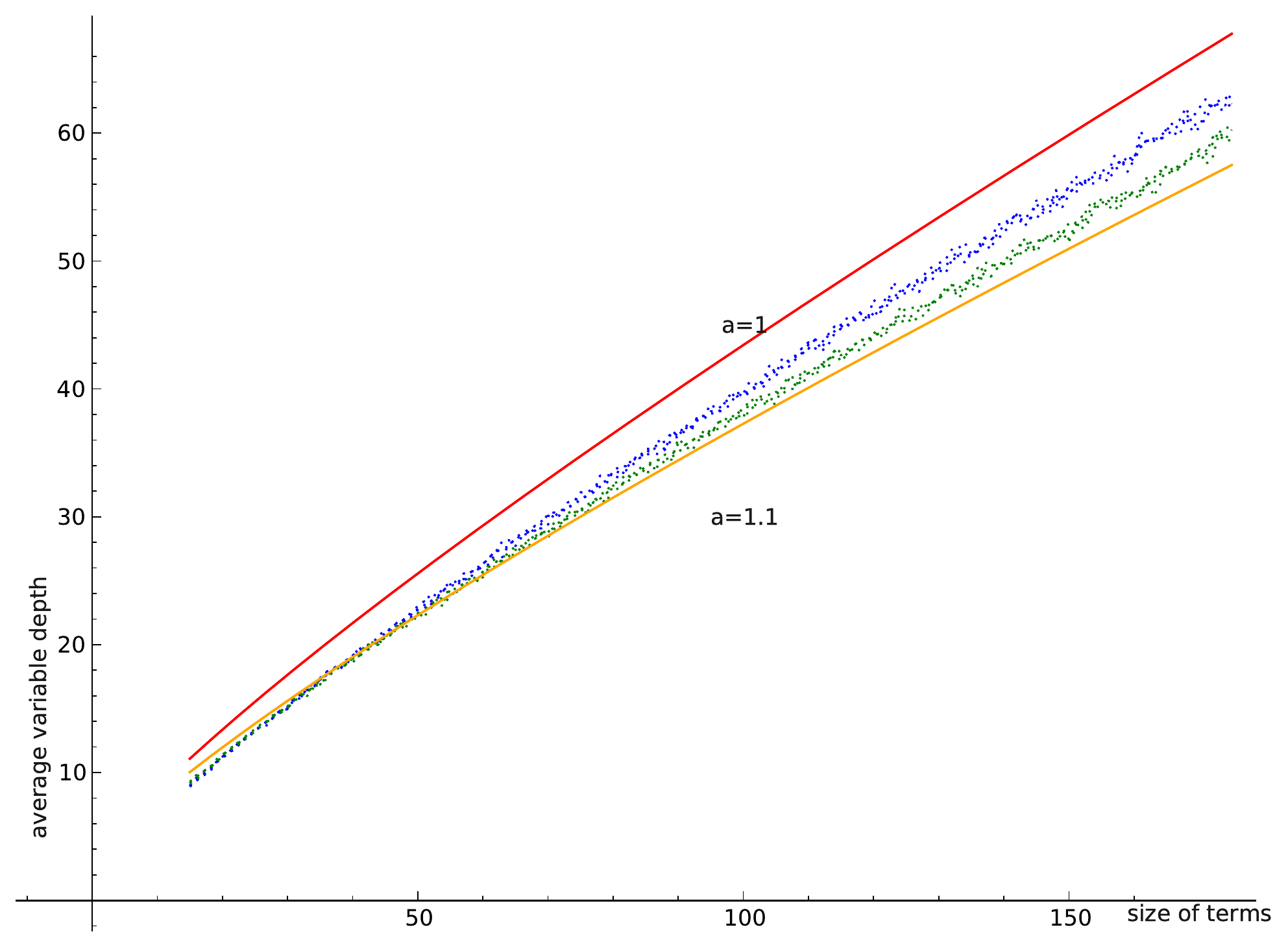}
  \caption{\textsf{From top to bottom:} Curve $\frac{2n}{\ln(n)}$, average variable
    depth for closed terms, average variable depth for closed normal forms and curve
    $\frac{2n}{\ln(n)^{1.1}}$.}
  \label{fig:av_depth}
\end{figure}

\begin{figure}[htb!]
  \centering
  \includegraphics[width=0.9\textwidth]{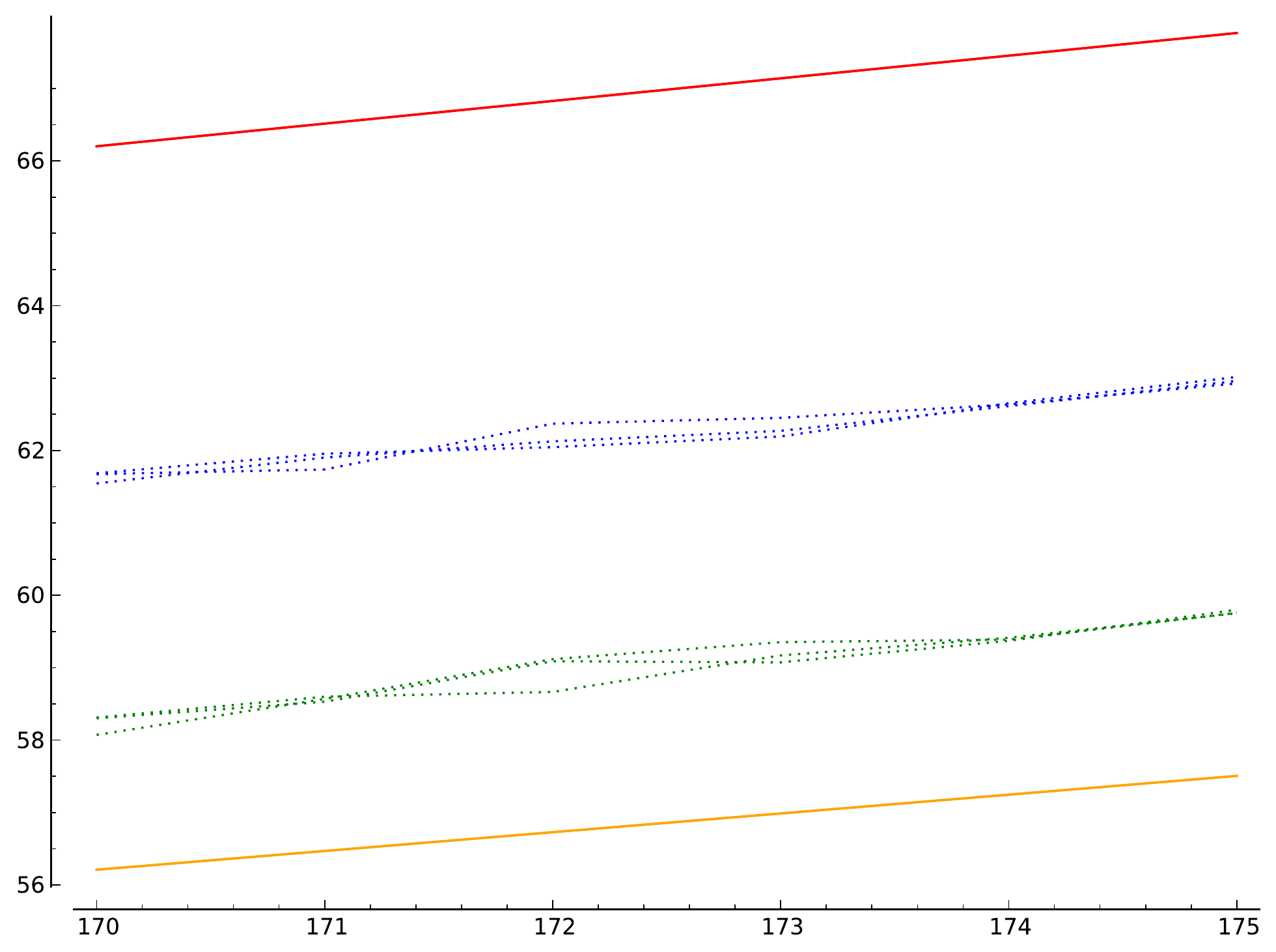}
  \caption{Magnification of Figure~\ref{fig:av_depth} between $n=170$ and $n=175$.}
  \label{fig:av_depth_170_175}
\end{figure}

Let us define the \emph{variable depth} as the number of symbols (abstractions and
applications) between a variable and the top of the term. For instance, given the
term $\lambda x.(\lambda y z.x) (\lambda u.u)$, the first occurrence of variable $x$
has depth~$1$ and the second occurrence of variable $x$ has depth $3$, while the
depth of $u$ is~$2$.  This gives the average depth~$2$ for this term.  Looking at the de
Bruijn indices of the brother term $`l \Var{1} \,(`l `l \Var{3}\, `l \Var{1})$, we
say that the first index~$\Var{1}$ has depth $1$, the second index $\Var{3}$ has
depth $3$ and the third index $\Var{1}$ has depth~$2$, with the same average $2$ as
previously.  In Figure~\ref{fig:av_depth}, we draw the average variable depth for
$300$ random closed terms of size~$15$ up to size $175$ (top scatter plot) and the average
variable depth for $300$ random normal forms of size~$15$ up to size $175$ (bottom
scatter plot) squeezed between the curves $\frac{2n}{\ln(n)^a}$ for $a=1$ and $a=1.1$
(plain lines).  In Figure~\ref{fig:av_depth_170_175} we see the same four curves
enlarged in the interval $[170..175]$.  This shows clearly that the
average variable depth of closed terms and closed normal forms are different.  On this basis, we
conjecture that the average depth of variables in closed terms is asymptotically bounded from above by
$\frac{2n}{\ln(n)}$ and that the average variable depth is slightly smaller for
normal forms than for closed terms.

\subsection{Average number of head $`l$'s per closed term}
\label{sec:average-number-l}

We say that $\lambda x$ is a \emph{head lambda} in a term $t$ if the latter is of the
form $\lambda x_1 \ldots \lambda x_n \lambda x.s$ for some positive integer $n$ and a
certain term $s$.  In order to know the structure of an average term, we are
interested in the average number of head $`l$'s occurring in closed terms.  In
Figure~\ref{fig:L-in-terms}, we compare values of some functions $\sqrt{\frac{n}{\ln(n)^a}}$ 
with the number of head $`l$'s in 1000 random closed terms and
the average number of head $`l$'s in 1000 normal forms, both in the case when size
goes from~$15$ to $150$.  We see that, in the case of closed terms, these numbers are
in accordance with Theorem~35 in~\cite{DBLP:journals/corr/abs-0903-5505}.

\begin{figure}[thb!]
  \centering
  \includegraphics[width=0.9\columnwidth]{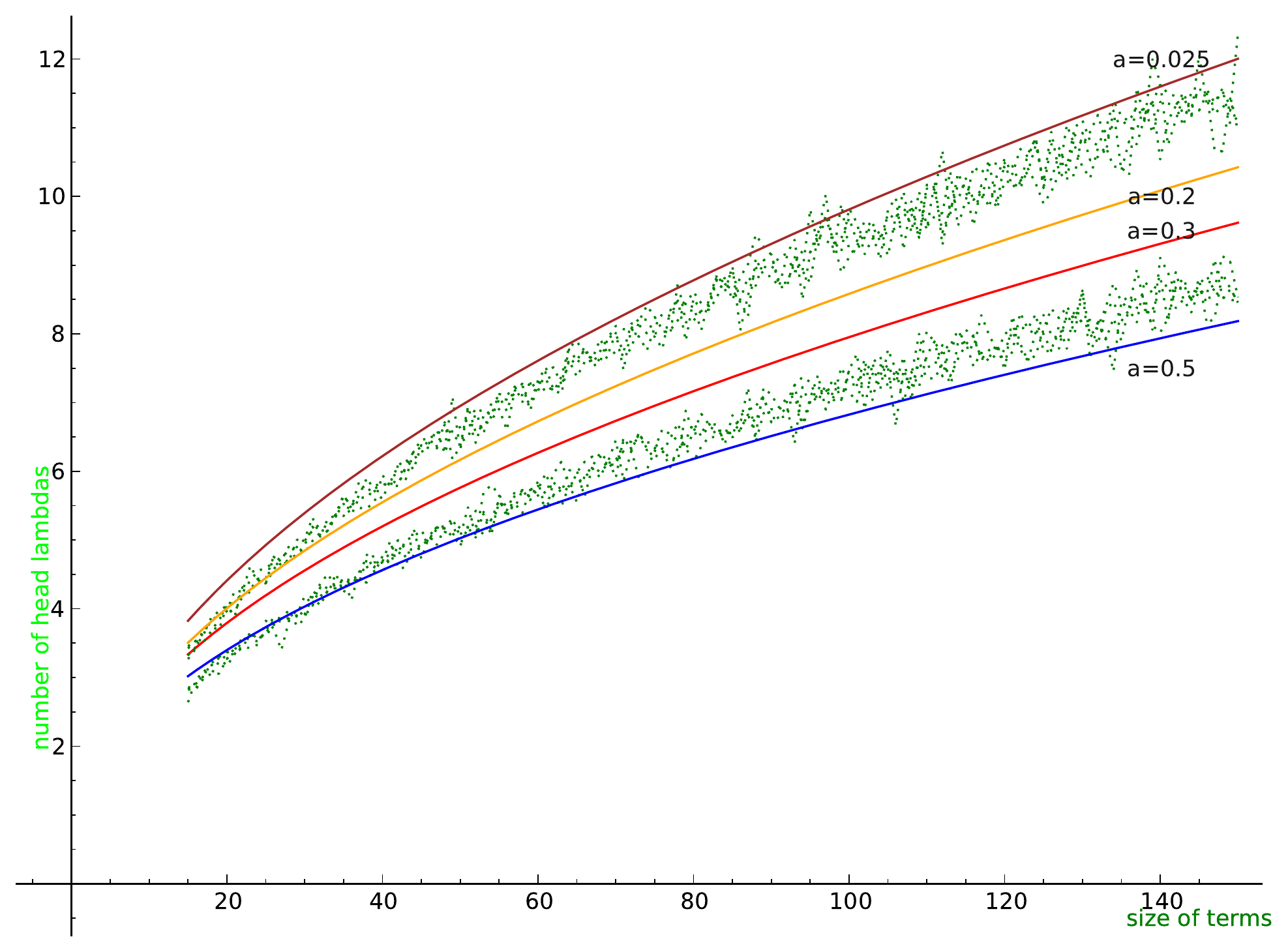}
  \caption{\textsf{Bottom}: average number of head $`l$'s per closed term.
    \textsf{Top}: average number of head $`l$'s per closed normal form.
    \textsf{In between}: curves $\sqrt{\frac{n}{\ln(n)^a}}$ for $a=0.025,0.2,0.3,0.5$.}
  \label{fig:L-in-terms}
\end{figure}

    \subsection{Ratio of simply typable terms among all terms}
    \label{sec:ratio-simply-typed}
    It is interesting to investigate the ratio of simply typable closed terms among
    all closed terms.
    There are $851\,368\,766$ closed $`l$-terms of size $11$, whereas
    there are $63\,782\,411$ closed $`l$-terms of size~$10$. Therefore, we performed
    computations for closed terms of size less than $11$. In fact, one cannot go much
    further due to the superexponential growth of the sequence enumerating closed
    terms.
    Table~\ref{tab:array9} gives the ratio of simply typable closed terms over all closed terms by an
    exhaustive examination of the closed terms up to $10$.
 \begin{table*}[htb!]
   \centering
    \begin{displaymath}
      \begin{array}{  c @{\qquad}   c   c   c   c c c c }
          \textbf{size} & 4 & 5 & 6 & 7 & 8 & 9 &10\\
          \hline
          \textbf{nb of terms} & 82 & 579 & 4\,741 & 43\,977 & 454\,283 & 5\,159\,441
          & 63\,782\,411\\
          \textbf{nb of typables} & 40& 238 & 1\,564 & 11\,807 & 98\,529 &
          904\,318& 9\,006\,364\\
          \textbf{ratio} & 0.4878 & 0.4110 & 0.3299  & 0.2684 & 0.2168& 0.1752& 0.1412\\
        \end{array}
      \end{displaymath}
   \caption{Numbers and ratios of simply typable closed terms up to size $10$}
   \label{tab:array9}
 \end{table*}
 For closed terms of size $8$ or larger, we computed the ratio by the Monte Carlo
 method. The results are given in Table~\ref{tab:array8}. We added the sequence of the numbers
 of simply typable closed terms of a given size to the \emph{On-line
   Encyclopedia of Integer Sequences} and it can be found under the number \textbf{A220471}.
 \begin{table*}[htb!]
   \centering
     \begin{scriptsize}
       \begin{math}
         \begin{array}{  c  @{\quad} r   r   r   r   r   r   r   r   r   r   r   r   r   r  r }
           \hline
           \textbf{size} & 8 & 9 & 10 & 11 & 12 & 13 & 14 & 15 & 16 & 20 & 30 & 40 & 45 & 50 \\
           \hline
           \textbf{ratio} & .216 &.175 & .141 & .111 & .089 & .073  & .056 & .047 & .039 &
           .0014 & .0012& . 0003
           & .00005 & {\scriptscriptstyle < 10^{-5}} \\
           \hline
         \end{array}
       \end{math}
     \end{scriptsize}
   \caption{Ratios of simply typable closed terms (of size at least $8$)}
   \label{tab:array8}
 \end{table*}

 We conclude that simply typable closed terms become very scarce as the size of the
 closed terms grows, falling to less than one over $100\, 000$ when the size gets
 larger than $50$.  Likewise, we have done the same task for normal forms.  We got
 the ratio by an exhaustive examination of normal forms up to $10$ in
 Table~\ref{tab:nfUpToSeven} and by the Monte Carlo method thereafter in
 Table~\ref{tab:ratioNF}.

\begin{table}[htb!]
  \centering
  \begin{displaymath}
    \begin{array}{  c @{\qquad}   c   c   c   c  c c c}
 \textbf{size} & 4 & 5 & 6 & 7 & 8 & 9 & 10  \\
\hline
\textbf{nb of NF} & 53&323&2\,359&19\,877&188\,591&1\,981\,963 & 22\,795\,849\\
\textbf{nb of typable NF} & 23&108&618&4\,092&30\,413&252\,590 & 2\,297\,954\\
\textbf{ratio} & 0.4339&0.3343&0.2619&0.2058 & 0.1612& 0.1274 & 0.1008\\
    \end{array}
  \end{displaymath}
  \caption{Numbers and ratios of simply typable closed normal forms up to size $10$}
\label{tab:nfUpToSeven}
\end{table}
\begin{table*}[htb!]
  \centering
  \begin{math}
    \begin{scriptsize}
      \begin{array}{  c @{\quad} c   c   c   c   c   c   c   c   c   c   c   c   c   c }
        \hline
        \textbf{size} & 8 & 9 & 10 & 11 & 12 & 13 & 14 & 15 & 16 & 20 & 30 & 40 & 45  \\
        \hline
        \textbf{ratio} & .159&.128  & .102 & .079& .063 &.049& .040  &.031  &.024 &.010 &.0006& 
        2.10^{-5} &{\scriptscriptstyle < 10^{-5}}\\
        \hline
      \end{array}
    \end{scriptsize}
  \end{math}

  \caption{Ratios of simply typable closed normal forms}
\label{tab:ratioNF}
\end{table*}

\subsection{Distribution of simply typable lambda terms among terms}

\begin{figure}[htb!]
  \centering
   \includegraphics[width=0.5\textwidth]{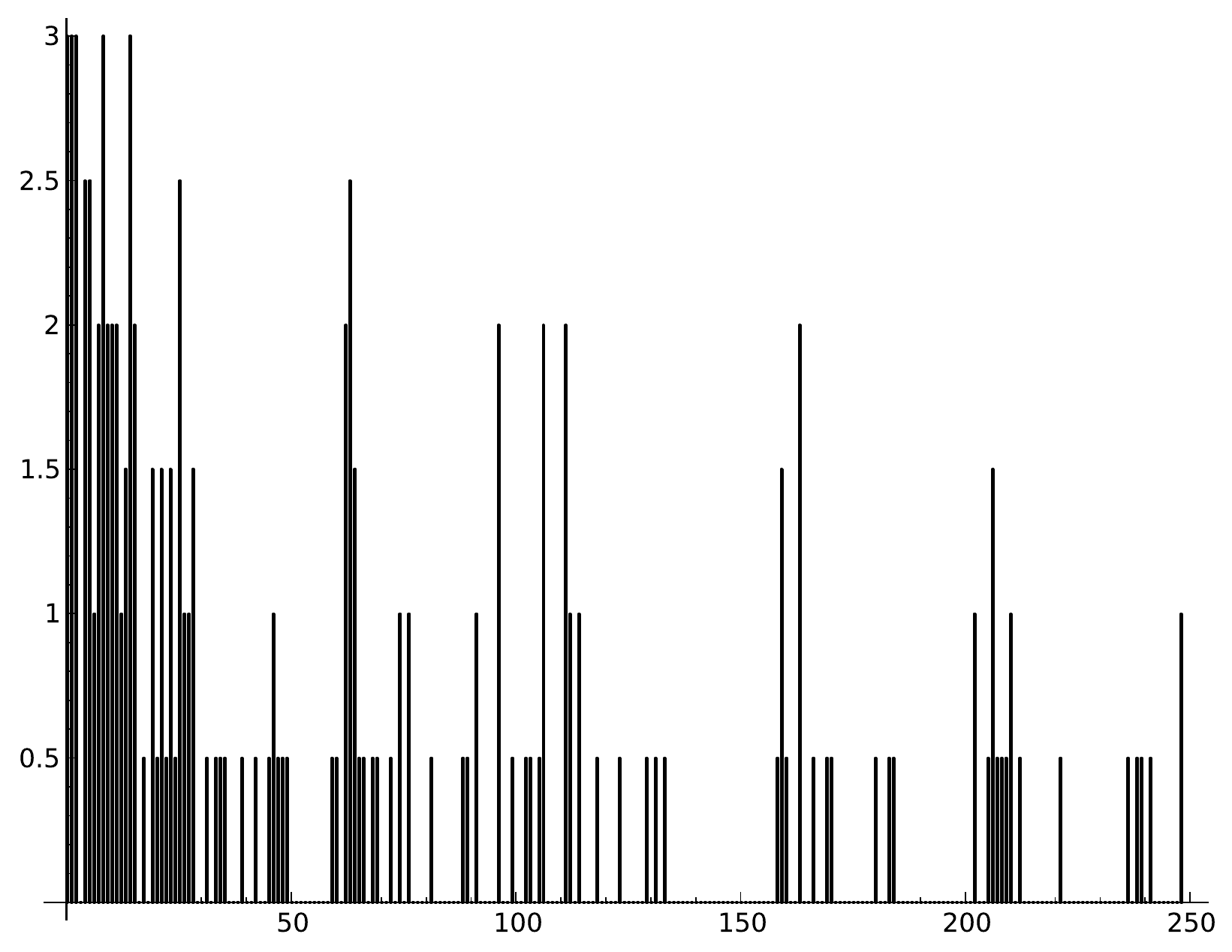}
\ifJFP
 \xymatrix@C 20pt{&& \ar@{->}[ll]_{abstractions} &&&& \ar@{->}[rr]^{applications}&&}
\else
\\\centerline{\xymatrix@C 20pt{&& \ar@{->}[ll]_{abstractions} &&&&
    \ar@{->}[rr]^{applications}&&}}
\fi
  \caption{Distribution of simply typable closed $`l$-terms of size \textbf{$25$}. $250$ segments on the
    horizontal axis, percentage ($0\%$ -- $3\%$) of typable closed $`l$-terms in segments on the vertical axis.}
  \label{fig:dist_typed_size_25}
\end{figure}

\begin{figure}[htb!]
  \centering
   \includegraphics[width=0.5\textwidth]{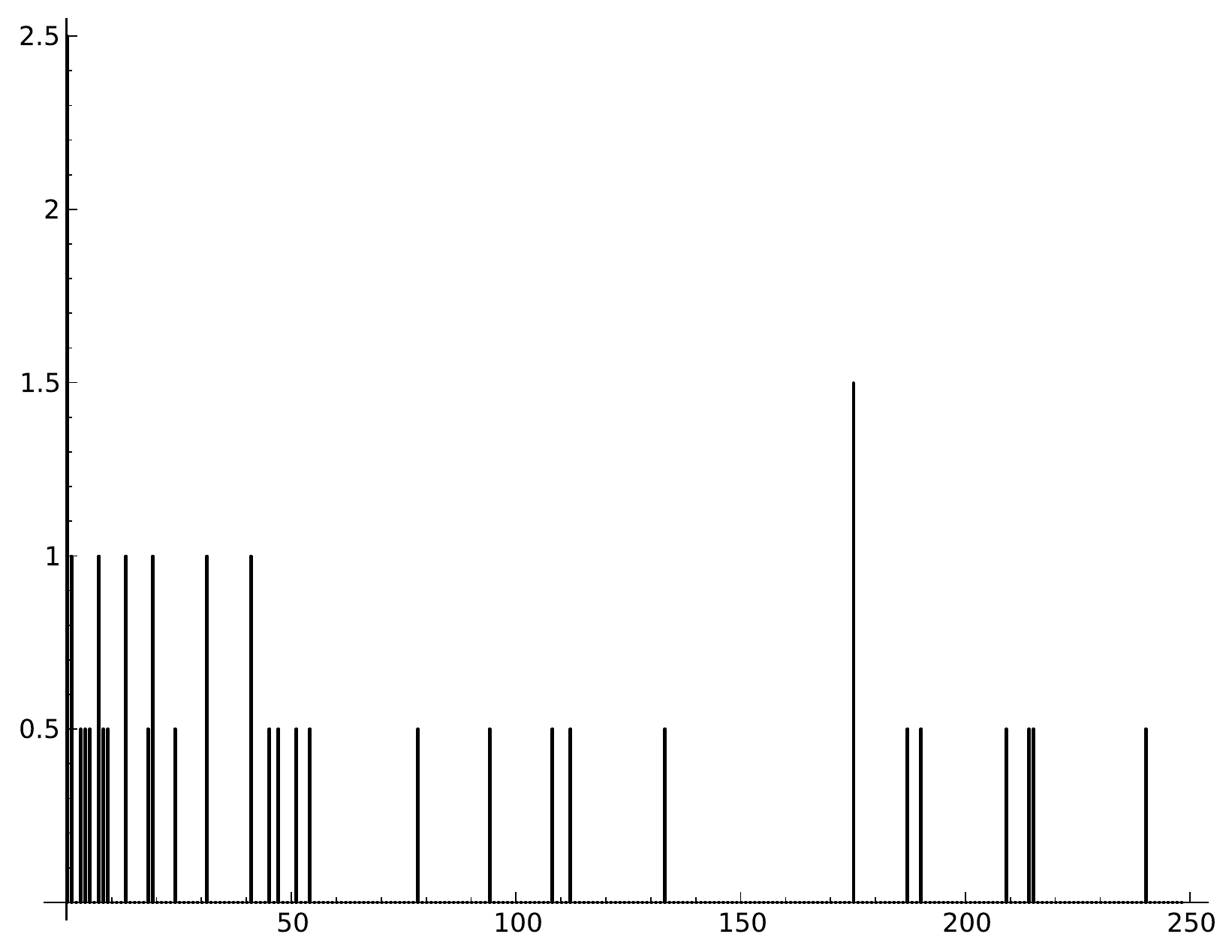}
 \ifJFP
 \xymatrix@C 20pt{&& \ar@{->}[ll]_{abstractions} &&&& \ar@{->}[rr]^{applications}&&}
\else
\\\centerline{\xymatrix@C 20pt{&& \ar@{->}[ll]_{abstractions} &&&&
    \ar@{->}[rr]^{applications}&&}}
\fi
\caption{Distribution of simply typable closed $`l$-terms of size
  \textbf{$30$}. $250$ segments on the horizontal axis, percentage ($0\%$ -- $2.5\%$)
  of typable closed $`l$-terms in segments on the vertical axis.}
  \label{fig:dist_typed_size_30}
\end{figure}

We said that simply typable terms are scarce, but we may wonder what scarce exactly
means.  More precisely, we may wonder how terms are distributed.  To provide an
answer to this question, we conducted experiments to approximate the distribution of
typable closed $`l$-terms  in segments of the interval $[1..T_{0,n}]$. We divided the
interval into regular segments and computed the ratio of simply typable terms
for a random sample of terms in each segment. 
Figure~\ref{fig:dist_typed_size_25} is typical of the results we got.  This
corresponds to an experiment on closed terms of size $25$ on $250$ segments with tests for
simple typability on $200$ random closed terms in each segment.  For each segment the height
of the vertical bar represents the ratio of typable closed terms to general closed terms in the corresponding
segment.   The simply typable closed terms are not uniformly distributed.  They
are more concentrated on the left of the interval corresponding to closed terms with low
numbers.  Those closed terms correspond to closed terms starting more often with abstractions than
with applications and this is recursively so for subterms giving the impression of
rolling waves.  For instance, there are $2\%$ to $3\%$ of typable closed terms (of size
$25$) starting with many abstractions, whereas for closed terms starting with many
applications there are large subintervals with almost no typable closed terms.
Figure~\ref{fig:dist_typed_size_30}, which gives the same statistics for closed terms of
size $30$, shows that typable closed terms get more scarce as the size of the closed terms grows.

The typable closed normal forms are even more scarcely distributed. As a comparison,
we drew the same graphs for closed normal forms (size of the closed normal forms:
$25$ and $30$, number of segments $250$, tests on $200$ closed terms) in
Figure~\ref{fig:dist_typed_NF_size_25}.  The typable closed normal forms aggregate
more on the left of the interval where closed terms start mostly with abstractions,
with peaks of $4\%$ to $6\%$ by segments.  Figure~\ref{fig:dist_typed_NF_size_30}
shows that scarcity of typable normal forms increases as the size of closed terms grows.

\begin{figure}[htb!]
  \centering
   \includegraphics[width=0.5\textwidth]{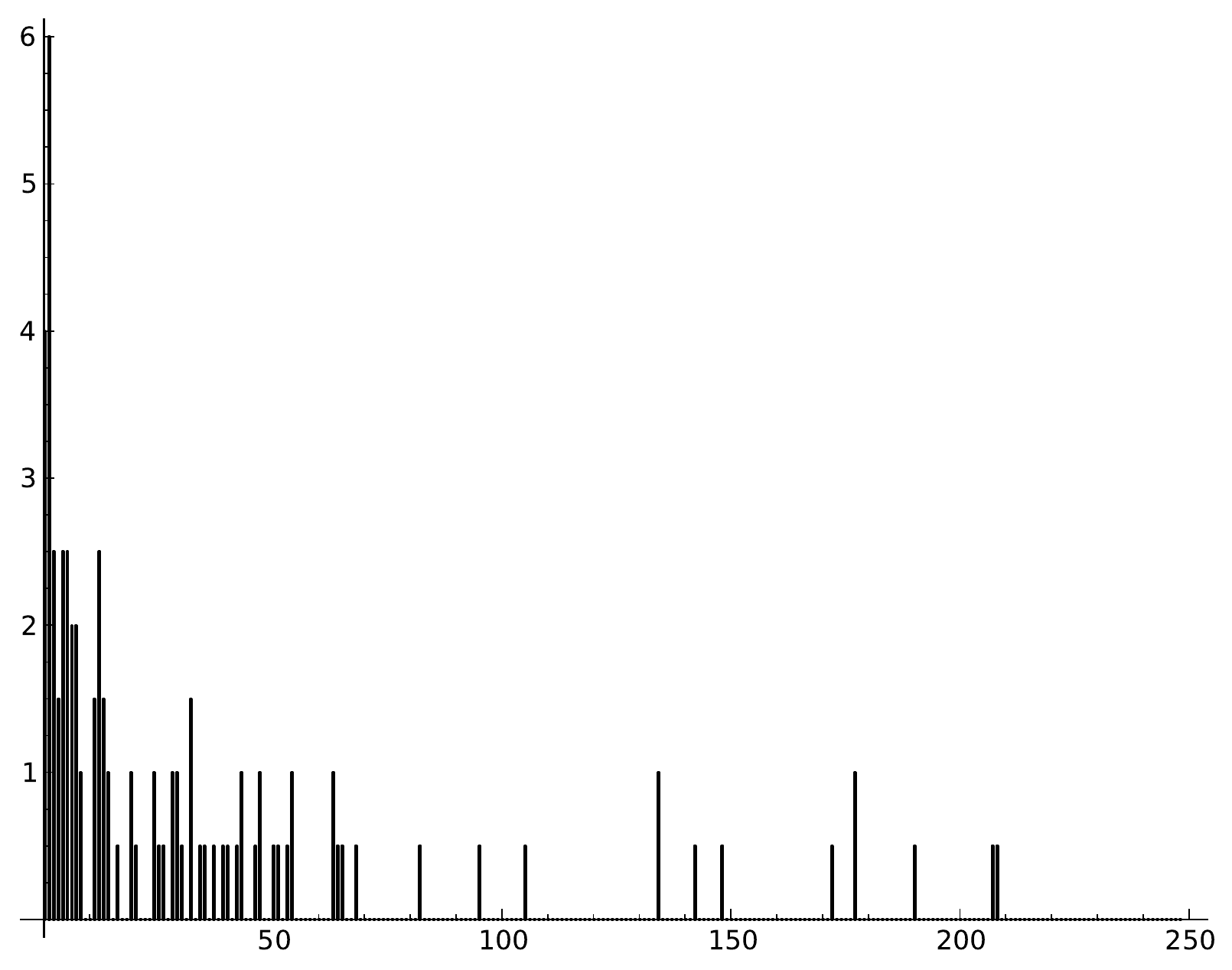}
  \ifJFP
 \xymatrix@C 20pt{&& \ar@{->}[ll]_{abstractions} &&&& \ar@{->}[rr]^{applications}&&}
\else
\\\centerline{\xymatrix@C 20pt{&& \ar@{->}[ll]_{abstractions} &&&&
    \ar@{->}[rr]^{applications}&&}}
\fi
  \caption{Distribution of simply typable closed normal forms  of size \textbf{$25$}. $250$ segments on the
    horizontal axis, percentage ($0\%$ -- $6\%$) of typable closed normal forms in segments on the vertical axis.}
  \label{fig:dist_typed_NF_size_25}
\end{figure}

\begin{figure}[htb!]
  \centering
   \includegraphics[width=0.5\textwidth]{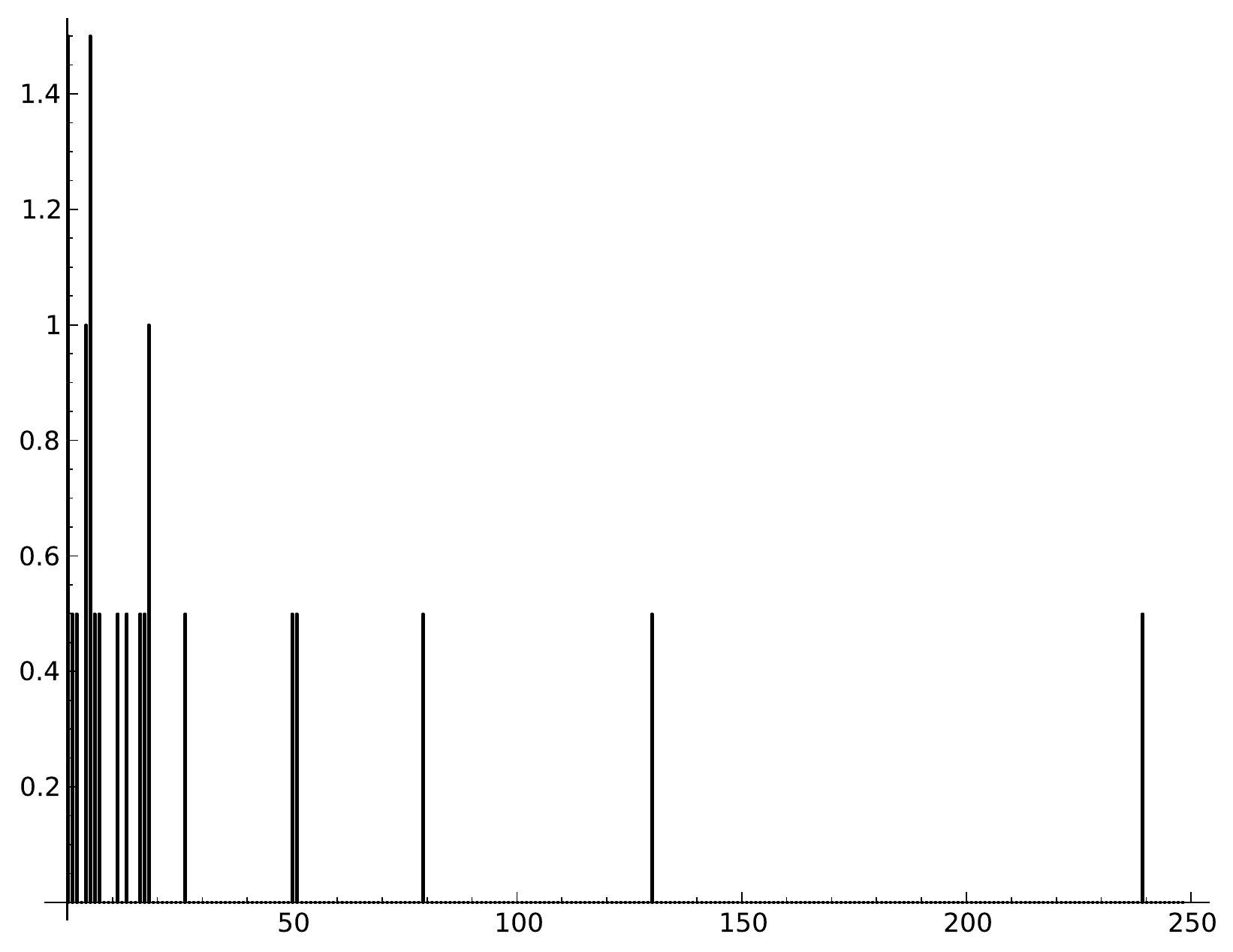}
  \ifJFP
 \xymatrix@C 20pt{&& \ar@{->}[ll]_{abstractions} &&&& \ar@{->}[rr]^{applications}&&}
\else
\\\centerline{\xymatrix@C 20pt{&& \ar@{->}[ll]_{abstractions} &&&&
    \ar@{->}[rr]^{applications}&&}}
\fi
  \caption{Distribution of simply typable closed normal forms  of size \textbf{$30$}. $250$ segments on the
    horizontal axis, percentage ($0\%$ -- $1.45\%$) of typable closed normal forms in segments on the vertical axis.}
  \label{fig:dist_typed_NF_size_30}
\end{figure}

%

\section{Related work}
\label{sec:related-work}

There are very few papers on counting $`l$-terms, whereas counting first order
terms is a classical domain of combinatorics. Apparently, the first traces of
counting expressions with (unbound) variables can be attributed to Hipparchus of
Rhodes (c. 190--120~BC) (see \cite{flajolet08:_analy_combin} p.~68).  Flajolet and
Sedgewick's book \cite{flajolet08:_analy_combin} is the reference on this subject.
Concerning counting $`l$-terms, we can cite only five works.  \ifJFP David
\emph{et. al.} \shortcite{DBLP:journals/corr/abs-0903-5505} and Bodini \emph{et. al.}
\shortcite{gittenberger-2011-ltbuh} \else \cite{DBLP:journals/corr/abs-0903-5505} and
\cite{gittenberger-2011-ltbuh} \fi study asymptotic behavior of formulas on counting
$`l$-terms.  Strictly speaking, they do not exhibit a recurrence formula for
counting.  In particular, David \emph{et al.} \ifJFP
\shortcite{DBLP:journals/corr/abs-0903-5505} \else
\cite{DBLP:journals/corr/abs-0903-5505} \fi provide only upper and lower bounds for
the numbers of $`l$-terms in order to get information about the distribution of
families of terms. For instance, they prove that ``asymptotically almost all
$`l$-terms are strongly normalizing''.  In \cite{DBLP:journals/tcs/Lescanne13} the
second author of the present paper proposes formulas for counting $`l$-terms in the
case of variables of size $1$, with more complex formulas and less results.  On
another hand, Christophe Raffalli proposed a formula for counting closed $`l$-terms,
which he derives from the formula for counting $`l$-terms with exactly $m$ distinct
free variables.  His formula appears in the \emph{On-line Encyclopedia of Integer
  Sequences} under the number \textbf{A135501}.  He considers size $1$ for the
variables.  Beside those works, John Tromp \ifJFP
\shortcite{DBLP:conf/dagstuhl/Tromp06} \else ~\cite{DBLP:conf/dagstuhl/Tromp06} \fi
proposes a rather different way of counting $`l$-terms which deserves to be
investigated further from the viewpoint of combinatorics.  His size function works on
terms with de Bruijn indices like ours and is (in our convention of starting at $1$)
as follows:
  \begin{eqnarray*}
    |\Var{n}| &=& n +1\\
    |`l M | &=& |M| + 2\\
    |M\,N| &=& |M| + |N|+2
  \end{eqnarray*}
  producing sequence \textbf{A114852} (and sequence \textbf{A195691} for closed
  normal forms) in the \emph{On-line Encyclopedia of Integer Sequences}.  This work
  is connected to program size complexity and \emph{Algorithmic Information
    Theory}~\cite{261084}.

  As concerns random generation, Wang in
  \cite{Wang05generatingrandom,wang04:_effic_gener_random_progr_their_applic}
  proposed algorithms for random generation of untyped $`l$-terms in the spirit of the
  counting formula of Raffalli for~\cite{Wang05generatingrandom} and in the spirit of
  $T_{n,m}$ for~\cite{wang04:_effic_gener_random_progr_their_applic}.  On term
  generation, we can also mention two works. In \cite{DBLP:conf/haskell/DuregardJW12}
  the authors enumerate and generate many more structures than
  $`l$-terms. In~\cite{DBLP:journals/jfp/KennedyV12}, the authors address a problem
  similar to ours. Indeed they play on the duality \emph{encoder-decoder} or
  \emph{ask-build}, when we speak of \emph{counting-generating} or
  \emph{ranking-unranking}.  Our unranking program (Figure~\ref{fig:prog-gen}) can be
  made easily a game, with questions like ``Is ${k \le (t (n-1) (m+1))}$?'' or like
  ``Is ${k > (t (n-1) (m+1))}$?'', but Kennedy and Vytiniotis do not know the precise
  range of their questions since they do not base their generation on counting. Since
  the size is not a parameter, their games may have unsuccessful issues and their
  programs can raise errors and are only error-free on well-formed games.  Pa{\l}ka
  \cite{palka12:_testin_compil,Palka:2011:TOC:1982595.1982615} uses generation of
  typable $`l$-terms to test Haskell compilers.  He acknowledges that, due to his
  method, he cannot guarantee the uniformity of his generator (see discussion in
  \cite{palka12:_testin_compil} p. 21 and p.~45).  Nonetheless, he found eight
  failures and four bugs in the \emph{Glasgow Haskell Compiler} demonstrating the
  interest in the method, probably due the ability of generating large terms. \ifJFP
  {Rodriguez~Yakushev} \& Jeuring \shortcite{DBLP:conf/aaip/YakushevJ09} \else {Rodriguez~Yakushev} \& Jeuring
  \cite{DBLP:conf/aaip/YakushevJ09} \fi study the feasibility of generic
  programming for the enumeration of typed terms. The given examples are of size $4$
  or $5$, no realistic examples are provided, randomness is not addressed and the
  authors confess that their algorithm is not efficient.  Knowing that there are
 $ 9\,006\,364$ simply typable closed terms of size $10$, one wonders if there is an
  actual use for such enumeration and it seems unrealistic to utilize enumeration for
  larger numbers.  The ``related work'' section of \cite{DBLP:conf/aaip/YakushevJ09}
  covers similar approaches, which all consist in cutting branches.  For this reason
  they do not generate terms uniformly.  A presentation of tree-like structure
  generation and a history of combinatorial generation is given \ifJFP by Knuth
  \shortcite{KnuthTAOCP_4_4}\else in \cite{KnuthTAOCP_4_4}\fi.

  Since we cited, as an application, the random generation of terms for the
  construction of samples for debugging functional programming compilers and the
  connection with languages with bound variables, it is sensible to mention
  \textsf{Csmith}~\cite{DBLP:conf/pldi/YangCER11}, which is the most recent and the
  most efficient bug tracker of C compilers.  It is based on random program
  generation and uses filters for generating programs enforcing semantic
  restrictions, like ours when generating simply typable terms.  However, the generation
  is not based on unranking, therefore \textsf{Csmith} lacks the ability to construct
  test case of a specific size on demand, but \textsf{Csmith} can generate large
  terms, which reveals to be useful, since the greatest number of distinct crash
  errors is found by programs containing 8K-16K tokens. However, one may wonder if
  this feature is not a consequence of the non-uniformity of the distribution.

\section{Acknowledgments}
\label{sec:acknowledgments}

Clearly Nikolaas de Bruijn and  Philippe Flajolet were influential all along this research.  We would like to
thank Marek Zaionc for stimulating discussions and for setting the problem of
counting $`l$-terms, Bruno Salvy for his help in the proof of
Proposition~\ref{prop:equiv}, Olivier Bodini, Jonas Dureg{\aa}rd, Dani\`{e}le Gardy,
Bernhard Gittenberger, Patrik Jansson, Jakub Kozik, John Tromp and the referees of this paper for
their useful suggestions and interactions.

\section{Conclusion}
\label{sec:conclusion}

This paper opens tracks of research in two directions, which are intrinsically
complementary, namely counting and generating, aka ranking and unranking. On counting terms, some hard problems
remain to be solved.  Probably the hardest and the most informative one is to
give an asymptotic estimation for the numbers of closed terms of size $n$. It seems that big
obstacles remain to be hurdled before getting a solution, since combinatorial
structures with binders have not been studied so far by combinatorists.  On generation of
terms, implementations have to be improved to go further in the production of
uniformly distributed terms, in particular, of uniformly generated typable terms.  


\begin{thebibliography}{10}

\bibitem{AbadiCCL91JFP}
Martin Abadi, Luca Cardelli, Pierre-Louis Curien, and Jean-Jacques L{\'e}vy.
\newblock Explicit substitutions.
\newblock {\em Journal of Functional Programming}, 1(4):375--416, 1991.

\bibitem{gittenberger-2011-ltbuh}
Olivier Bodini, Dani\`ele Gardy, and Bernhard Gittenberger.
\newblock Lambda terms of bounded unary height.
\newblock In {\em Proceedings of the Eighth Workshop on Analytic Algorithmics
  and Combinatorics}, pages 23--32, 2011.

\bibitem{DBLP:journals/corr/abs-cs-0501035}
Pierre-Louis Curien.
\newblock Introduction to linear logic and ludics, part {I}.
\newblock {\em CoRR}, abs/cs/0501035, 2005.
\newblock \url{http://arxiv.org/abs/cs/0501035}.

\bibitem{DBLP:conf/popl/DamasM82}
Lu\'{\i}s Damas and Robin Milner.
\newblock Principal type-schemes for functional programs.
\newblock In {\em Conference Record of the Ninth Annual {ACM} {Symposium on
  Principles of Programming Languages}, Albuquerque, New Mexico, USA, January
  1982}, pages 207--212. ACM Press, 1982.
\newblock Richard A. DeMillo ed.

\bibitem{DBLP:journals/corr/abs-0903-5505}
Ren{\'e} David, Katarzyna Grygiel, Jakub Kozik, Christophe Raffalli, Guillaume
  Theyssier, and Marek Zaionc.
\newblock Asymptotically almost all $\lambda$-terms are strongly normalizing.
\newblock {\em CoRR}, abs/0903.5505v3, 2009.
\newblock \url{http://http://arxiv.org/abs/0903.5505v3}.

\bibitem{deBruijn72}
N.~de~Bruijn.
\newblock Lambda calculus notation with nameless dummies, a tool for automatic
  formula manipulation, with application to the {Church-Rosser} theorem.
\newblock {\em Indagationes Mathematicae}, 34(5):381--392, 1972.

\bibitem{DBLP:conf/haskell/DuregardJW12}
Jonas Dureg{\aa}rd, Patrik Jansson, and Meng Wang.
\newblock Feat: functional enumeration of algebraic types.
\newblock In Janis Voigtl{\"a}nder, editor, {\em Proceedings of the 5th {ACM
  SIGPLAN} {Symposium on Haskell, Haskell} 2012, {Copenhagen, Denmark}, 13
  {September} 2012}, pages 61--72. ACM, 2012.

\bibitem{flajolet08:_analy_combin}
Philippe Flajolet and Robert Sedgewick.
\newblock {\em Analytic Combinatorics}.
\newblock Cambridge University Press, 2008.

\bibitem{GallierSnyderTCS89}
Jean Gallier and Wayne Snyder.
\newblock Complete sets of transformations for general {E}-unification.
\newblock {\em Theoret. Comput. Sci.}, 67(2-3):203--260, October 1989.

\bibitem{GraKnuPat}
Ronald~L. Graham, Donald~E. Knuth, and Oren Patashnik.
\newblock {\em Concrete mathematics}.
\newblock Addison-Wesley, Reading, MA, 1989.

\bibitem{DBLP:journals/logcom/Hindley08}
J.~Roger Hindley.
\newblock {M. H. Newman's} typability algorithm for lambda-calculus.
\newblock {\em J. Log. Comput.}, 18(2):229--238, 2008.

\bibitem{JouannaudKirchner-rob91}
Jean-Pierre Jouannaud and Claude Kirchner.
\newblock Solving equations in abstract algebras: a rule-based survey of
  unification.
\newblock In Jean-Louis Lassez and G.~Plotkin, editors, {\em Computational
  {L}ogic. {E}ssays in honor of {A}lan {R}obinson}, chapter~8, pages 257--321.
  The MIT press, Cambridge (MA, USA), 1991.

\bibitem{DBLP:journals/jfp/KennedyV12}
Andrew~J. Kennedy and Dimitrios Vytiniotis.
\newblock Every bit counts: The binary representation of typed data and
  programs.
\newblock {\em J. Funct. Program.}, 22(4-5):529--573, 2012.

\bibitem{KnuthTAOCP_4_4}
Donald~E. Knuth.
\newblock {\em The Art of Computer Programming, Generating All Trees-History of
  Combinatorial Generation}, volume 4 (fascicle 4).
\newblock Addison-Wesley Publishing Company, 2006.

\bibitem{lang02:_polyn_cataly}
Wolfdieter Lang.
\newblock On polynomials related to derivatives of the generative functions of
  the {Catalan} numbers.
\newblock {\em The Fibonacci Quarterly}, 40(4):299--313, 2002.

\bibitem{LescannePOPL94}
Pierre Lescanne.
\newblock From $\lambda\sigma$ to $\lambda\upsilon$, a journey through calculi
  of explicit substitutions.
\newblock In Hans Boehm, editor, {\em Proceedings of the 21st Annual {ACM}
  Symposium on {P}rinciples {O}f {P}rogramming {L}anguages, {Portland (Or.,
  USA)}}, pages 60--69. ACM, 1994.

\bibitem{DBLP:journals/tcs/Lescanne13}
Pierre Lescanne.
\newblock On counting untyped lambda terms.
\newblock {\em Theor. Comput. Sci.}, 474:80--97, 2013.

\bibitem{261084}
Ming Li and Paul Vit\'{a}nyi.
\newblock {\em An introduction to {Kolmogorov} complexity and its applications
  (2nd ed.)}.
\newblock Springer-Verlag New York, Inc., Secaucus, NJ, USA, 1997.

\bibitem{newman43:_strat}
Max H.~A. Newman.
\newblock Stratified systems of logic.
\newblock {\em Proceedings of the {Cambridge} Philosophical Society},
  39:69--83, 1943.

\bibitem{palka12:_testin_compil}
Micha{\l} Pa{\l}ka.
\newblock Testing an optimising compiler by generating random lambda terms.
\newblock Licentiatavhandling, Department of Computer Science and Engineering,
  Chalmers University of Technology and G\"oteborg University, May 2012.

\bibitem{Palka:2011:TOC:1982595.1982615}
Micha{\l}~H. Pa{\l}ka, Koen Claessen, Alejandro Russo, and John Hughes.
\newblock Testing an optimising compiler by generating random lambda terms.
\newblock In {\em Proceedings of the 6th International Workshop on Automation
  of Software Test}, AST'11, pages 91--97, New York, NY, USA, 2011. ACM.

\bibitem{DBLP:conf/aaip/YakushevJ09}
Alexey {Rodriguez~Yakushev} and Johan Jeuring.
\newblock Enumerating well-typed terms generically.
\newblock In Ute Schmid, Emanuel Kitzelmann, and Rinus Plasmeijer, editors,
  {\em AAIP}, volume 5812 of {\em Lecture Notes in Computer Science}, pages
  93--116. Springer, 2009.

\bibitem{sage}
William Stein et~al.
\newblock {\em {S}age {M}athematics {S}oftware ({V}ersion 4.8)}.
\newblock The Sage Development Team, 2012.
\newblock {\tt http://www.sagemath.org}.

\bibitem{integer:ranking}
{The On-Line Encyclopedia of Integer Sequences$^{\tiny\textregistered}$
  (OEIS$^{\tiny\textregistered}$) Wiki}.
\newblock Ranking and unranking functions, 2013.
\newblock \url{https://oeis.org/wiki/Ranking_and_unranking_functions}, [Online;
  accessed 18-June-2013].

\bibitem{DBLP:conf/dagstuhl/Tromp06}
John Tromp.
\newblock Binary lambda calculus and combinatory logic.
\newblock In Marcus Hutter, Wolfgang Merkle, and Paul M.~B. Vit{\'a}nyi,
  editors, {\em Kolmogorov Complexity and Applications}, volume 06051 of {\em
  Dagstuhl Seminar Proceedings}. Internationales Begegnungs- und
  Forschungszentrum fuer Informatik (IBFI), Schloss Dagstuhl, Germany, 2006.

\bibitem{wang04:_effic_gener_random_progr_their_applic}
Jue Wang.
\newblock The efficient generation of random programs and their applications.
\newblock Master's thesis, Honors Thesis, Wellesley College, Wellesley, MA, May
  2004.

\bibitem{Wang05generatingrandom}
Jue Wang.
\newblock Generating random lambda calculus terms.
\newblock Citeseer, 2005.
\newblock available from
  \url{http://citeseerx.ist.psu.edu/viewdoc/summary?doi=10.1.1.95.2624}.

\bibitem{DBLP:conf/pldi/YangCER11}
Xuejun Yang, Yang Chen, Eric Eide, and John Regehr.
\newblock Finding and understanding bugs in {C} compilers.
\newblock In Mary~W. Hall and David~A. Padua, editors, {\em Proceedings of the
  32nd {ACM SIGPLAN} {Conference on Programming Language Design and
  Implementation}, {PLDI} 2011, {San Jose, CA, USA, June 4-8}, 2011}, pages
  283--294. ACM, 2011.

\end{thebibliography}

\appendix

\section{Terms with exactly $m$ distinct free variables}
\label{sec:number-terms-with}

Here we study the numbers of terms with exactly $m$ distinct
free variables, the formulas for counting those numbers and their relations with
quantities we considered.

\subsection{A formula}
\label{sec:formula}

Let us show how to derive the formula for counting $`l$-terms with exactly $m$
distinct free variables.  This formula is adapted from a similar one when variables
have size $1$ due to Raffalli (\emph{On-line Encyclopedia of Integer Sequences}
under the number \textbf{A135501}).  We assume that terms are built with usual variables
(not de Bruijn indices) and that they are equivalent up to a renaming of bound variables
and up to $`a$-conversion.  Let us denote the number of $`l$-terms of size $n$ with
exactly $m$ distinct free variables by $f_{n,m}$.

Notice first that there is no term of size $0$ with no free variable, hence $f_{0,0}
= 0$.  There is one term of size $0$ with one free variable, hence $f_{0,1} = 1$.  The maximum number of variables for
a $`l$-term of size $n$ is when the only operators are applications and all the
variables are different.  One has then a binary tree with $n$ internal nodes and
$n+1$ leaves holding $n+1$ variables.  This means that for $m$ beyond $n+1$ variables
there is no term of size $n$ with exactly $m$ distinct free variables.  Hence
\[ f_{n,m} =0 \qquad \textrm{when} ~ m > n+1.\]

In the general case, a term of size $n+1$ with $m$ free variables starts either with
an abstraction or with an application. Terms starting with an abstraction, say $`l
x$, on a term $M$ contribute in two ways: either $M$ does not contain $x$ as a free
variables or $M$ contains $x$ as a free variable.  There are $f_{n,m}$ such $M$'s in
the first case and $f_{n,m+1}$ in the second.  This gives the two first summands
$f_{n,m}+f_{n,m+1}$ in the formula.  Now, let us see how terms starting with an
application look like.  Assume they are of the form $P\,Q$ and of size $n+1$.  For
some $p \leq n$, the term $P$ is of size $p$ and $Q$ is of size $n-p$.  These
terms share $c$ common variables ($0\le c \le m$), while $P\, Q$ has $m$ distinct
free variables altogether.  The term~$P$ has $k$ distinct free variables, which do not occur in $Q$,
hence $P$ has $k+c$ distinct free variables altogether.  The term $Q$ has $m-k$
distinct free variables.  Therefore, given a set of private variables for $P$, a set of
common variables, and a set of private variables for $Q$, there are $f_{p,k+c} f_{n-p,
  m-k}$ possible pairs $(P,Q)$.  There are ${m \choose c}$ ways to choose the $c$
common variables among $m$ and there are ${m-c \choose k}$ ways to split the
remaining variables into $P$ and $Q$, namely $k$ for $P$ and $m-c-k$ for $Q$, hence
the third summand of the formula:
\[\sum_{p=0}^{n} \sum_{c=0}^{m} \sum_{k=0}^{m - c} {m \choose c} {m - c\choose k} f_{p,k+c} f_{n-p,m-k}.\]
Now, we obtain the whole formula:
\begin{displaymath}
f_{n+1,m} \ =\ f_{n,m} + f_{n,m+1} + \sum_{p=0}^{n} \sum_{c=0}^{m} \sum_{k=0}^{m - c} {m \choose c} {m - c\choose k} f_{p,k+c} f_{n-p,m-k}.
\end{displaymath}

\subsection{Relations between $T_{n,m}$ and $f_{n,m}$}
\label{sec:relat-betw}

The number of terms of size $n$ with exactly $i$ indices in $[\Var{1}..\Var{m}]$ is
\({m \choose i} f_{n,i} .\)
Therefore the number of terms with indices in $[\Var{1}..\Var{m}]$ is:
\[T_{n,m} \quad = \quad \sum_{i=0}^{m} {m \choose i} f_{n,i}.\]
By the inversion formula (\cite{GraKnuPat} p.~192), we get:
\[f_{n,m} \quad = \quad \sum_{i=0}^m (-1)^{m+i} {m \choose i} T_{n,i}.\] This shows
with no surprise that $f_{n,m}$ and $T_{n,m}$ are simply connected.  Knowing
that the $T_{n,m}$'s can be easily computed, this provides a formula simpler than
Raffalli's to compute the $f_{n,m}$'s.

\subsection{A relation between $f_{n,m}$ and $c_{n,i}$}
We write $R^{(m)}_i$ the number of surjections from $[\Var{1}..\Var{i}]$ to $[\Var{1}..\Var{m}]$.
To get a relation between $f_{n,m}$ and $c_{n,i}$, we can reproduce the process with
which we associated $T_{n,m}$ and $c_{n,i}$ (Section~\ref{sec:an-comb-interpr}), but
instead of applications from $[\Var{1}..\Var{i}]$ to $[\Var{1}..\Var{m}]$, we have surjections from $[\Var{1}..\Var{i}]$
to $[\Var{1}..\Var{m}]$, since this time we count terms with exactly $m$ variables and all the de
Bruijn indices must be reached by the applications.  Therefore
\begin{eqnarray*}
  f_{n,m} &=& \sum_{i=0}^n c_{n,i} R^{(m)}_i.
\end{eqnarray*}
Recall that
\[R^{(m)}_i \quad = \quad \sum_{j=0}^m {m \choose j} (-1)^j (i-j)^m.\]
We can now go further in the expression of $f_{n,m}$.
\begin{eqnarray*}
   f_{n,m} &=& \sum_{i=0}^n c_{n,i} \sum_{j=0}^m {m\choose j} (-1)^j (m-j)^i\\
   &=&\sum_{i=0}^n c_{n,i} \sum_{k=0}^m {m\choose k} (-1)^{m-k} k^i\\
   &=& \sum_{k=0}^m  {m\choose k} (-1)^{m-k} \sum_{i=0}^n c_{n,i} k^i\\
   &=& \sum_{k=0}^m   (-1)^{m+k} {m\choose k}\ T_{n,k}.
\end{eqnarray*}
which is another proof of the formula of Section~\ref{sec:relat-betw}.

\section{A program for testing simple typability}
\label{sec:typ_prog}

In this section we give a simple \textsf{Haskell} program for \emph{testing simple
typability} of a term also called \emph{type reconstruction}.  The program which
works on the types \textsf{Type} and \textsf{Equation}:
\begin{verbatim}
data Type = Var Int
          | Arrow Type Type

type Equation = (Type,Type)            
\end{verbatim}
has three parts. First, a function builds the set of typability equational
constraints of a closed term. This function called \textsf{buildConstraint}
(Figure~\ref{fig:bldcons}) takes a term and returns its potential principal type,
which will be made explicit after solving the constraints, and a list of equational
constraints.  It requires a function \textsf{build} which will be called through the
terms.  Along its traversal of the term, the function \textsf{build} has to know the
depth $d$ (the number of $`l$'s it crossed). Moreover, \textsf{build} creates type
variables.  Actually, a constraint builder creates type variables in two situations:
when it creates a context for the first time, that is when it deals with a de Bruijn
index, and when it creates the type to be returned by an application.  Since type
variables are objects of the form \textsf{Var~i}, where \textsf{i} is an
\textsf{Int}, \textsf{build} takes an \textsf{Int} which is increased whenever a new
type variable is created. We call the latter a \emph{cursor} and denote it by
\textsf{cu}. \textsf{build} returns a 4-uple, namely the potential principal type of
the term, a context (a list of types associated with de Bruijn indices), a set of
equational constraints and the updated cursor.

\begin{figure}[!btph]
  \centering
  \begin{normalsize}
\begin{verbatim}
buildConstraint :: Term -> (Type, [Equation])
buildConstraint t = 
  let (ty,[],constraint,_) = build t 0 0 
  in (ty, constraint)
  where
    build :: Term -> Int -> Int -> (Type, [Type], [Equation],Int)
    build (Index i) d cu = 
      let ii = fromIntegral i 
      in (Var (cu+ii-1), [Var j | j<-[cu..cu+d-1]],[],cu+d)
    build (Abs t) d cu = 
      let (ty,(a:cntxt),constraint,cu') = build t (d+1) cu 
      in ((Arrow a ty),cntxt,constraint,cu')
    build (App t1 t2) d cu = 
      let (ty1, cntxt1, constraint1, cu1) = build t1 d cu 
          (ty2, cntxt2, constraint2, cu2) = build t2 d cu1
          ty = (Var cu2) in (ty,
                             cntxt1,
                             (ty1,(Arrow ty2 ty)):(zip cntxt1 cntxt2) 
                                ++ constraint1 ++ constraint2, 
                             cu2+1)
\end{verbatim}
  \end{normalsize}
  \caption{The function \textsf{buildConstraint}}
\label{fig:bldcons}
\end{figure}
To solve equational constraints we use a method based on transformation
rules~\cite{GallierSnyderTCS89,JouannaudKirchner-rob91}. For that, we use a function
\textsf{decompose} which splits an equation when both sides are arrow types.
Moreover, when \textsf{decompose} meets an equation $`s_1 "->" `s_2 = `a$, the latter
is transformed into $`a = `s_1 "->" `s_2$.

\begin{verbatim}
decompose :: Equation -> [Equation]
decompose ((Arrow ty1 ty2), (Arrow ty1' ty2')) = 
  decompose (ty1,ty1') ++ decompose (ty2,ty2')
decompose ((Arrow ty1 ty2),(Var i)) = [(Var i,(Arrow ty1 ty2))]
decompose (ty1,ty2) = [(ty1,ty2)]
\end{verbatim}

A predicate \textsf{nonTrivialEq} is necessary to filter out the trivial
equations, i.e., of the form $`a=`a$.
\begin{verbatim}
nonTrivialEq :: Equation -> Bool
nonTrivialEq (Var i, Var j) = i /= j
nonTrivialEq (ty1, ty2) = True
\end{verbatim}
A predicate {\EUR} checks whether a given variable belongs to a composed type. This
is necessary to detect cycles. For instance, $`a = `b "->" `a$ is a cycle and shall
be detected, whereas $`a = `a$ is a trivial equation, not a cycle, and shall be
removed.
\begin{alltt} 
(\EUR) :: Type -> Type -> Bool 
(Var i) \EUR (Var j) = False -- strict occurrence only
(Var i) \EUR (Arrow ty1 ty2) =  (Var i) \EUR= ty1 || (Var i) \EUR= ty2
    where (Var i) \EUR= (Var j) = i == j
          (Var i) \EUR= (Arrow ty1 ty2) = (Var i) \EUR= ty1 || 
                                       (Var i) \EUR= ty2
\end{alltt}
Once this test is done, one can replace a variable $`a$ occurring in an equation of
the form $`a = `s$ by $`s$ everywhere else in the set of equational constraints
before putting the equation $`a=`s$ in the solved part.
\pagebreak[4]
\begin{alltt}
(\(\leftarrow\)) :: Type -> Equation -> Type
(Var j) \(\leftarrow\) (Var i, ty) = if i == j then ty else Var j
(Arrow ty1 ty2) \(\leftarrow\) (Var i, ty) = 
  Arrow (ty1 \(\leftarrow\) (Var i, ty)) (ty2 \(\leftarrow\) (Var i, ty))

replace::  Equation -> Equation -> Equation
replace (Var i,ty) (ty1,ty2) = (ty1 \(\leftarrow\) (Var i,ty),  ty2 \(\leftarrow\) (Var i,ty))
\end{alltt}
The function \textsf{solve} solves the set of equational constraints.  It returns three things:
first a~list of equations which are the equations yet to be solved, second a list of
equations which are the already solved equations and third a condition that says if the set
of equational constraints is solvable or not.  In other words, \textsf{solve} returns
\textsf{True} as the third component if the set of equational constraints has a solution, that
is when the set of equational constraints is empty.  It returns \textsf{False} when it detects a
cycle. Otherwise it tries to apply the transformations whenever it is possible, that
is when the set of equational constraints is not empty. Indeed if there is no cycle and if the
set of equational constraints is not empty, a~transformation is always applicable.
\begin{normalsize}
\begin{alltt}
solve :: [Equation] -> [Equation]
                    -> ([Equation],[Equation],Bool)
solve ((Var i,ty):l) sol = 
  if Var i \EUR ty 
  then ((Var i,ty):l,sol,False) -- cycle detected
  else solve (map (replace (Var i,ty)) l) ((Var i,ty):sol)
solve (eq:l) sol = solve (filter nonTrivialEq (decompose eq) ++ l) sol
solve [] sol = ([],sol,True)
\end{alltt}
\end{normalsize}
Since we have all the ingredients, the test of typability consists in building the equational constraint and trying to
solve it. 
\begin{verbatim}
isTypable :: Term -> Bool
isTypable t = let (_,c) = buildConstraint t 
                  (_,_,b) = solve c []
              in b
\end{verbatim}
Notice that we have everything to build the principal type of the term if it is
typable. 
\end{document}

